\title[Relativistic Boltzmann equation]
{\rm Asymptotic behaviour of the relativistic Boltzmann equation in the
Robertson-Walker spacetime}
\author{Ho Lee}
\address{Department of Mathematics and Research Institute for Basic Science,
Kyung Hee University, Seoul, 130-701, Republic of Korea}
\email{holee@khu.ac.kr}
\begin{document}

\newtheorem{theorem}{Theorem}[section]
\newtheorem{lemma}{Lemma}[section]
\newtheorem{corollary}{Corollary}[section]
\newtheorem{proposition}{Proposition}[section]
\newtheorem{remark}{Remark}[section]
\newtheorem{definition}{Definition}[section]

\renewcommand{\theequation}{\thesection.\arabic{equation}}
\renewcommand{\thetheorem}{\thesection.\arabic{theorem}}
\renewcommand{\thelemma}{\thesection.\arabic{lemma}}
\newcommand{\bbr}{\mathbb R}
\newcommand{\bbs}{\mathbb S}

\begin{abstract}
In this paper, we study the relativistic Boltzmann equation in the spatially flat
Robertson-Walker spacetime. For a certain class of scattering kernels,
global existence of classical solutions is proved. We use the standard method of
Illner and Shinbrot for the global existence and apply the splitting technique
of Guo and Strain for the regularity of solutions. The main interest of this paper
is to study the evolution of matter distribution, rather than the evolution of spacetime.
We obtain the asymptotic behaviour of solutions and will understand
how the expansion of the universe affects the evolution of matter distribution.
\end{abstract}

\maketitle

\section{Introduction}\setcounter{equation}{0}
One of the simplest ways to describe an expanding universe is to consider the spatially
homogeneous and isotropic spacetimes with suitable matter models.
This kind of spacetimes are called the Robertson-Walker(RW) spacetimes, and in this paper
we are interested in a kinetic matter model. One may consider the Vlasov equation as
a kinetic matter model to get the Einstein-Vlasov system, and this system of equations has been
extensively studied for last several decades. In this case, more general spacetime models
such as the spacetimes of Bianchi types can be considered, and many interesting
results can be found in the literature, for instance see
\cite{A11,L04,N10,R,R94,R95,R96,R05,RT99,RU00}.
In this paper, we only consider the spatially flat RW spacetime with the metric
\[
ds^2=-dt^2+R^2(t)(dx^2+dy^2+dz^2),
\]
and as a matter model we will consider the Boltzmann equation.
Here, the scale factor $R$ will be assumed to be given.
The main purpose of this paper is to study the evolution of matter distribution,
rather than the evolution of spacetime, and in this sense we study the asymptotic
behaviour of solutions of the Boltzmann equation in the RW spacetime.

In the kinetic theory, matter is treated as a collection of particles, and the
Boltzmann equation takes into account the effect of collisions between particles.
It is well known that collisions between particles
lead to an interesting phenomenon, which is not observed
in the Vlasov case, such that almost every solution of the Boltzmann equation
converges to an equilibrium state. This can be proved in a
mathematically rigorous way, and indeed in the Newtonian and
the Minkowski cases innumerable references can be found on this subject.
We only refer to \cite{CIP,G} and their references for the Newtonian Boltzmann equation
and \cite{GS12,S101,S102,S11} for recent results on the relativistic Boltzmann
equation in the Minkowski spacetime.
In this paper, we will consider the relativistic Boltzmann equation
in the RW spacetime. The main interest of this paper is to investigate
how the expansion of the universe affects the asymptotic behaviour of solutions of
the Boltzmann equation. Recently in \cite{L13}, this question was answered in the context of
Newtonian cosmology. The main result of \cite{L13} can be stated as follows: depending on
the rate of growth of the scale factor, solutions of the Boltzmann equation may or may not
approach equilibrium states. In this paper, we will answer this question in the framework
of general relativity.

The Boltzmann equation in general relativity has not been studied much.
Local existence was proved by Bancel and Choquet-Bruhat many years ago \cite{B73,BCB73},
and Noutchegueme and his colleagues have studied the Boltzmann equation in some
cosmological settings with strong assumptions on
scattering kernels \cite{ND06,NDT05,NT05,T09}.
A hard potential case has been recently studied in \cite{LR131}
in the RW spacetime. In this paper, we will also study the Boltzmann equation
in the RW spacetime, but we will consider a different type of scattering kernels
with a different type of existence proof.
In the following two subsections, we will review basic information
on the Boltzmann equation in the RW spacetime. In Section \ref{Sec preliminaries}, we introduce
a change of variables to write the Boltzmann equation in a simple form, and then we make
the main assumptions on the transformed equation.
After establishing several pointwise estimates, we prove global existence
of solutions in Section \ref{Sec global existence}.

\subsection{The Boltzmann equation}
The relativistic Boltzmann equation in the RW spacetime can be written as follows:
\begin{gather}\label{boltzmann in RW}
\partial_tf-2\frac{\dot{R}}{R}\sum_{i=1}^3p^i\partial_{p^i}f=Q(f,f),
\end{gather}
where $f=f(t,p)$ is the distribution function, $\dot{R}$ denotes the time derivative of $R$,
and $Q$ is called the Boltzmann collision operator. It can be written as
\begin{gather}\label{collision op in RW}
Q(f,f)=R^3\int_{\bbr^3}\int_{\bbs^2}v_\phi\sigma(g,\omega)\Big(
f(p')f(q')-f(p)f(q)\Big)\,d\omega\,dq,\quad v_\phi=\frac{g\sqrt{s}}{p^0q^0},
\end{gather}
where $v_\phi$ and $\sigma$ are called the M{\o}ller velocity and the scattering kernel
respectively, and the scalar quantities $g$ and $s$ are defined by
\begin{gather}\label{g and s}
g=\sqrt{(p_\alpha-q_\alpha)(p^\alpha-q^\alpha)}\quad\text{and}\quad
s=-(p_\alpha+q_\alpha)(p^\alpha+q^\alpha).
\end{gather}
Here, $p^\alpha$ and $q^\alpha$ denote four-dimensional momentum variables, and
$p$ and $q$ in the equation \eqref{boltzmann in RW} denote the spatial components
of $p^\alpha$ and $q^\alpha$, i.e.
\[
p^\alpha=(p^0,p^1,p^2,p^3)\quad\text{and}\quad
p=(p^1,p^2,p^3).
\]
Latin indices will be assumed to run from $1$ to $3$ as in \eqref{boltzmann in RW},
while Greek indices run from $0$ to $3$ as in \eqref{g and s}, and the Einstein
summation convention will be assumed. The indices are lowered through the metric considered,
i.e. $p_\alpha=g_{\alpha\beta}p^\beta$, and in the RW case we have $p_0=-p^0$ and $p_k=R^2p^k$.
Assuming that all the particles have the same mass, we get the mass shell condition
\begin{equation}\label{mass shell}
p_\alpha p^\alpha=-m^2=-1,
\end{equation}
where we assumed $m=1$ for simplicity, and in the RW case it
can be written as
\[
p^0=\sqrt{1+R^2|p|^2},
\]
where $|\cdot|$ denotes the modulus of $p$ in $\bbr^3$. The momentum variables with primes
like $p'$ and $q'$ in \eqref{collision op in RW} denote post-collisional momenta
for given pre-collisional momenta $p$ and $q$. In the following section, we briefly review
several different types of representations of post-collisional momenta.
For more details on the relativistic kinetic equations, we refer to
\cite{A11,dvv,E,LR13,R,St}.

\subsection{Post-collisional momenta}
Suppose that two particles having momenta $p^\alpha$ and $q^\alpha$ collide,
and let $p'^\alpha$ and $q'^\alpha$ be their momenta after the collision.
In the collision process it is assumed that their total energy and momentum are conserved,
which can be written by
\[
p'^\alpha+q'^\alpha=p^\alpha+q^\alpha.
\]
Due to this energy-momentum conservation, the post-collisional momenta
can be parametrized by $p^\alpha$ and $q^\alpha$ with some additional
parameters. In the nonrelativistic case, the following two different types of representations
of post-collisional momenta are easily found. For given three dimensional vectors $\xi$
and $\xi_*$, we have
\begin{gather}
\xi'=\frac{\xi+\xi_*}{2}+\frac{|\xi-\xi_*|}{2}\sigma,\quad
\xi_*'=\frac{\xi+\xi_*}{2}-\frac{|\xi-\xi_*|}{2}\sigma,\quad\sigma\in\bbs^2,\label{sigma nr}\\
\text{or}\qquad
\xi'=\xi-((\xi-\xi_*)\cdot\omega)\omega,\quad
\xi_*'=\xi_*+((\xi-\xi_*)\cdot\omega)\omega,\quad\omega\in\bbs^2,\label{omega nr}
\end{gather}
where $\cdot$ and $|\cdot|$ are the usual inner product and the corresponding norm
in $\bbr^3$,
and they are sometimes called the $\sigma$-representation and the $\omega$-representation
respectively (pages 126--127 of \cite{V}).
To find a relativistic analogue of \eqref{sigma nr}, we make
an assumption that the
post-collisional momenta have the form of
\begin{gather}
p'^\alpha=\frac{p^\alpha+q^\alpha}{2}+\frac{g}{2}\Omega^\alpha,\quad
q'^\alpha=\frac{p^\alpha+q^\alpha}{2}-\frac{g}{2}\Omega^\alpha,\label{sigma r}
\end{gather}
for some four-vector $\Omega^\alpha$. In a similar way, we may write
\begin{gather}\label{omega r}
p'^\alpha=p^\alpha-\Big((p_\beta-q_\beta)\Omega^\beta\Big)\Omega^\alpha,\quad
q'^\alpha=q^\alpha+\Big((p_\beta-q_\beta)\Omega^\beta\Big)\Omega^\alpha,
\end{gather}
for some different four-vector $\Omega^\alpha$.
Then, the mass shell condition \eqref{mass shell} gives
two constraints on the parameter $\Omega^\alpha$.
In the case of \eqref{sigma r}, we have
\[
-1=-\frac{s}{4}\pm\frac{g^2}{4}n_\alpha \Omega^\alpha+\frac{g^2}{4}\Omega_\alpha\Omega^\alpha,
\]
where $n^\alpha=p^\alpha+q^\alpha$, and equivalently we get
two constraint equations:
\begin{equation}\label{constraint}
n_\alpha\Omega^\alpha=0\quad\mbox{and}\quad\Omega_\alpha\Omega^\alpha=1.
\end{equation}
In the case of \eqref{omega r},
the same constraint equations are obtained.
Hence, the parameter $\Omega^\alpha$ reduces to a four-vector which can be represented
by two independent variables, and we take
$\omega\in\bbs^2$ as usual. The simplest way to find $\Omega^\alpha$ is
to construct a vector $t^\alpha$ which is orthogonal to $n^\alpha$ and then normalize it.
Using the parameter $\omega\in\bbs^2$, we find
$t^\alpha=(n_i\omega^i,-n_0\omega)$ and normalize it to get
\begin{equation}\label{Omega}
\Omega^\alpha=\frac{t^\alpha}{\sqrt{t_\beta t^\beta}}\quad\mbox{with}\quad
t^\alpha=(n_i\omega^i,-n_0\omega).
\end{equation}
It is easy to see that $n_\alpha t^\alpha=0$ and $\Omega_\alpha \Omega^\alpha=1$.
Plugging \eqref{Omega} into
\eqref{sigma r} or \eqref{omega r},
we obtain two parametrizations
of $p'^\alpha$ and $q'^\alpha$ in terms of $p^\alpha$, $q^\alpha$, and $\omega$.
We note that the parameter $\Omega^\alpha$ is basically the same with the
one that was previously found by
Glassey and Strauss \cite{GS91,GS93} in the Minkowski case.
It is written in the Minkowski space as
\begin{equation}\label{Omega M}
\Omega^\alpha_{M}=\frac{t^\alpha}{\sqrt{t_\beta t^\beta}}\quad\mbox{with}\quad
t^\alpha=(n\cdot\omega,n^0\omega),
\end{equation}
where $n\cdot\omega$ is the usual inner product of $n$ and $\omega$ in $\bbr^3$.
We combine \eqref{omega r} and \eqref{Omega M} to get
the parametrization of \cite{GS91,GS93}.

In the Minkowski case we can find a different form of the parameter $\Omega^\alpha$.
To find a vector orthogonal to $n^\alpha$, we have constructed $t^\alpha$ in \eqref{Omega}
and then normalized it.
However, we may decompose $\omega=\omega_1+\omega_2$
such that $n\cdot\omega_1=0$ and $n\cdot\omega_2=n\cdot\omega$, for instance
\begin{equation}\label{decomposition of omega}
\omega_1=\omega-\frac{(n\cdot\omega)n}{|n|^2}\quad\mbox{and}\quad
\omega_2=\frac{(n\cdot\omega)n}{|n|^2}.
\end{equation}
Then, $t^\alpha=t^\alpha(\omega)$, as a linear function of $\omega$, can also be decomposed as
\[
t^\alpha=t^\alpha(\omega)=t^\alpha(\omega_1)+t^\alpha(\omega_2)=:t_1^\alpha+t_2^\alpha.
\]
We can observe
\[
n_{\alpha} t_1^\alpha=
n_{\alpha} t_2^\alpha=
t_{1\alpha}^{\phantom\alpha} t_2^\alpha=0,
\]
and
\[
t_{1\alpha}^{\phantom\alpha} t_1^\alpha=(n^0)^2|\omega_1|^2,\quad
t_{2\alpha}^{\phantom\alpha} t_2^\alpha=s|\omega_2|^2.
\]
Since $|\omega_1|^2+|\omega_2|^2=1$, we find a different form of $\Omega^\alpha$ as
\begin{equation}\label{Omega S}
\Omega_{S}^\alpha=\frac{1}{n^0}t_1^\alpha+\frac{1}{\sqrt{s}}t_2^\alpha\quad\mbox{with}\quad
t_k^\alpha=(n\cdot\omega_k,n^0\omega_k),\quad k=1,2.
\end{equation}
It is easy to see that \eqref{Omega S} satisfies the constraints \eqref{constraint}. We remark
that the representation combining \eqref{sigma r} and \eqref{Omega S} is basically the one that
Strain has derived and used in a series of his papers, for instance \cite{GS12,S101,S102,S11}.
It has turned out that using the parameter \eqref{Omega S} is crucial in the proof of existence of
regular solutions to the relativistic Boltzmann equation.
We refer to \cite{GS12,S11} for more detailed explanation of the
parametrization of this type.

In this paper, we consider the RW spacetime.
In this case the parameter $\Omega^\alpha$ in \eqref{Omega} reduces to
the following:
\begin{equation}\label{Omega R}
\Omega^\alpha_{R}=\frac{t^\alpha}{\sqrt{t_\beta t^\beta}}\quad\mbox{with}\quad
t^\alpha=(R^2n\cdot\omega,n^0\omega).
\end{equation}
On the other hand,
the parameter $\Omega^\alpha_S$ in \eqref{Omega S}
is only valid in the Minkowski case, hence we
generalize it to the RW case as follows:
\begin{equation}\label{Omega RS}
\Omega^\alpha_{RS}=\frac{1}{Rn^0}t^\alpha_1+\frac{1}{R\sqrt{s}}t^\alpha_2\quad\mbox{for}\quad
t_k^\alpha=(R^2n\cdot\omega_k,n^0\omega_k),\quad k=1,2,
\end{equation}
where $\omega_1$ and $\omega_2$ are given by \eqref{decomposition of omega}.
In the present paper we only use the representation \eqref{sigma r},
but for the parameter $\Omega^\alpha$ we use both of \eqref{Omega R} and \eqref{Omega RS}
as in \cite{GS12}. Since we are interested in classical solutions, we need to control
derivatives of post-collisional momenta,
and the splitting technique and interplay between
\eqref{Omega R} and \eqref{Omega RS} proposed in \cite{GS12}
will efficiently control them.

\section{Preliminaries}\label{Sec preliminaries}\setcounter{equation}{0}
In this section, we first
introduce a change of variables so that the Boltzmann equation in the RW spacetime
is written in a simple form, and then
we make the main assumptions on the transformed equation.
At the end of the section,
we collect elementary lemmas which can be easily proved by direct calculations.

\subsection{Change of variables}
In this paper we will consider the RW spacetimes,
and in this case the Boltzmann equation is written in a simple form
if we use covariant variables.
To be explicit, the distribution
function $f$ will be considered as a function of $t$ and
\[
p_k=g_{k\beta}p^\beta=R^2p^k,\quad k=1,2,3.
\]
This argument was previously used in \cite{LR131}, but in this paper
for simplicity, instead of using lower indices, we introduce a new variable $v$ such that
\begin{equation}\label{v}
v=(v^1,v^2,v^3),\quad v^k=R^2p^k,\quad v^0=\sqrt{1+R^{-2}|v|^2}=p^0,
\end{equation}
where $|v|$ is the modulus of $v$ in $\bbr^3$. The post-collisional momentum $p'^\alpha$
in the representation of \eqref{sigma r} and \eqref{Omega R} is written as follows:
\begin{align*}
p'^0&=\frac{p^0+q^0}{2}
+\frac{g}{2}\frac{R^2n\cdot\omega}{\sqrt{R^2(n^0)^2-R^4(n\cdot\omega)^2}}
=\frac{v^0+u^0}{2}+\frac{g}{2R}\frac{\tilde{n}\cdot\omega}{\sqrt{(\tilde{n}^0)^2
-R^{-2}(\tilde{n}\cdot\omega)^2}},\\
p'^k&=\frac{p^k+q^k}{2}
+\frac{g}{2}\frac{n^0\omega^k}{\sqrt{R^2(n^0)^2-R^4(n\cdot\omega)^2}}
=\frac{v^k+u^k}{2R^2}+\frac{g}{2R}\frac{\tilde{n}^0\omega^k}{\sqrt{(\tilde{n}^0)^2
-R^{-2}(\tilde{n}\cdot\omega)^2}},
\end{align*}
where $\tilde{n}=v+u$ and $\tilde{n}^0=v^0+u^0$,
and we may write
\begin{equation}\label{change of variables}
v'^k=R^2p'^k,\quad u'^k=R^2q'^k,\quad v'^0=p'^0,\quad u'^0=q'^0.
\end{equation}
In a similar way the post-collisional momentum in \eqref{sigma r} and \eqref{Omega RS}
is written as
\begin{align*}
p'^0&=\frac{p^0+q^0}{2}
+\frac{Rg}{2}\frac{1}{\sqrt{s}}(n\cdot\omega_2)
=\frac{v^0+u^0}{2}+\frac{g}{2R}\frac{1}{\sqrt{s}}(\tilde{n}\cdot\omega),\\
p'^k&=\frac{p^k+q^k}{2}
+\frac{g}{2R}\left(\omega_1+\frac{n^0\omega_2}{\sqrt{s}}\right)
=\frac{v^k+u^k}{2R^2}+\frac{g}{2R}
\left(\left(\omega-\frac{(\tilde{n}\cdot\omega)\tilde{n}}{|\tilde{n}|^2}\right)
+\frac{\tilde{n}^0}{\sqrt{s}}\frac{(\tilde{n}\cdot\omega)\tilde{n}}{|\tilde{n}|^2}\right),
\end{align*}
and we write $v'$ and $u'$ as in \eqref{change of variables}.
Throughout the paper we will only consider the new variables $v$ and $u$, hence we may drop the
tildes to define
\begin{gather}
n^0:=v^0+u^0\quad\mbox{and}\quad n:=v+u,\label{n^alpha}
\end{gather}
and in the representation of \eqref{sigma r} and \eqref{Omega R} we have
\begin{gather}
\left(
\begin{aligned}
&v'^0\\
&v'^k
\end{aligned}
\right)
=
\left(
\begin{aligned}
&\frac{v^0+u^0}{2}+\frac{g}{2R}\frac{n\cdot\omega}{\sqrt{(n^0)^2
-R^{-2}(n\cdot\omega)^2}}\\
&\frac{v^k+u^k}{2}+\frac{Rg}{2}\frac{n^0\omega^k}{\sqrt{(n^0)^2
-R^{-2}(n\cdot\omega)^2}}
\end{aligned}
\right),\quad\omega\in\bbs^2,\label{v' R}
\end{gather}
while in the representation of \eqref{sigma r} and \eqref{Omega RS} we have
\begin{gather}
\left(
\begin{aligned}
&v'^0\\
&v'^k
\end{aligned}
\right)
=
\left(
\begin{aligned}
&\frac{v^0+u^0}{2}+\frac{g}{2R}
\frac{1}{\sqrt{s}}(n\cdot\hat{\omega})\\
&\frac{v^k+u^k}{2}+\frac{Rg}{2}
\left(\left(\hat{\omega}^k-\frac{(n\cdot\hat{\omega})n^k}{|n|^2}\right)
+\frac{n^0}{\sqrt{s}}\frac{(n\cdot\hat{\omega})n^k}{|n|^2}\right)
\end{aligned}
\right),\quad\hat{\omega}\in\bbs^2,\label{v' RS}
\end{gather}
where $\omega$ and $\hat{\omega}$ denote unit vectors in $\bbr^3$.

We note that $v'$ is written in \eqref{v' R} and \eqref{v' RS} with different unit vectors
$\omega$ and $\hat{\omega}$ on $\bbs^2$.
For a fixed $v'$, we may compare \eqref{v' R} and \eqref{v' RS} to find a relation between
$\omega$ and $\hat{\omega}$. Let $A$ and $B^k$ be the quantities in \eqref{v' R} such that
\[
v'^0=\frac{v^0+u^0}{2}+\frac{g}{2R}A,\quad
v'^k=\frac{v^k+u^k}{2}+\frac{Rg}{2}B^k,
\]
and then $\hat{\omega}$ is determined by comparing \eqref{v' R} and \eqref{v' RS} such that
\begin{align*}
\frac{1}{\sqrt{s}}(n\cdot\hat{\omega})=A,\quad
\left(\hat{\omega}^k-\frac{(n\cdot\hat{\omega})n^k}{|n|^2}\right)
+\frac{n^0}{\sqrt{s}}\frac{(n\cdot\hat{\omega})n^k}{|n|^2}=B^k.
\end{align*}
To be explicit, we plug the first one to the second one to get
\begin{align}\label{relation between omega}
\hat{\omega}^k&=B^k+\frac{\sqrt{s}An^k}{|n|^2}-\frac{n^0An^k}{|n|^2}\\
&=\frac{1}{\sqrt{(n^0)^2-R^{-2}(n\cdot\omega)^2}}
\left(n^0\omega^k+\frac{\sqrt{s}(n\cdot\omega)n^k}{|n|^2}-\frac{n^0(n\cdot\omega)n^k
}{|n|^2}\right).\nonumber
\end{align}
We observe that $\omega=\hat{\omega}$ in two special cases.
Multiplying ${\omega}$ to \eqref{relation between omega}, we obtain
\begin{align*}
\omega\cdot\hat{\omega}
&=\frac{1}{\sqrt{(n^0)^2-R^{-2}(n\cdot\omega)^2}}
\left(n^0+\frac{\sqrt{s}(n\cdot{\omega})^2}{|n|^2}
-\frac{n^0(n\cdot\omega)^2}{|n|^2}\right),
\end{align*}
and this quantity is unity when
$n\cdot{\omega}=0$ or $(n\cdot{\omega})^2=|n|^2$.
In the other cases the unit vectors $\omega$ and $\hat{\omega}$ will in general have different
values. For instance, we may consider from \eqref{relation between omega}
\[
n\cdot\hat{\omega}=\frac{\sqrt{s}}{\sqrt{(n^0)^2-R^{-2}(n\cdot\omega)^2}}
(n\cdot\omega),
\]
which shows that $n\cdot\omega$ and $n\cdot\hat{\omega}$ have the same sign and satisfy
\begin{align}\label{relation between omega 2}
|n\cdot{\omega}|\geq |n\cdot\hat{\omega}|,
\end{align}
where we used \eqref{elementary 6} of Lemma \ref{Lem elementary}.
With these variables the Boltzmann equation is written as
\begin{equation}\label{boltzmann}
\partial_tf=R^{-3}\iint v_\phi\sigma(g,{\omega})
\Big(f(v')f(u')-f(v)f(u)\Big)\,d{\omega}\,du,\quad v_\phi=\frac{g\sqrt{s}}{v^0u^0},
\end{equation}
where $v'$ and $u'$ are parametrized by \eqref{v' R} or \eqref{v' RS}.
The scalar quantities $g$ and $s$ are understood as functions of $v$ and $u$, i.e.
\begin{equation}\label{g and s in v}
g=\sqrt{-(v^0-u^0)^2+R^{-2}|v-u|^2}
=\sqrt{-2-2v^0u^0+2R^{-2}(v\cdot u)},\quad
s=4+g^2.
\end{equation}
Note that the equation \eqref{boltzmann}
is equivalent to \eqref{boltzmann in RW} for classical solutions. In this paper
the Boltzmann equation will refer to the equation \eqref{boltzmann}.

\subsection{Main assumptions}
To prove existence of solutions of the equation \eqref{boltzmann},
we will follow the standard argument of Illner and Shinbrot \cite{IS84},
where existence of small solutions of the Boltzmann equation
was proved in the Newtonian case.
The original idea has been applied to several different physical settings,
for instance see \cite{G06,G01,S101}, where the Vlasov-Poisson-Boltzmann system and
the relativistic Boltzmann equation in the Minkowski spacetime have been studied.
The followings are our main assumptions in this paper.
We first assume that the scale factor is given as an increasing function, and then
choose a suitable weight function. We also make assumptions on the scattering kernel,
and the relevance of the assumption will be discussed.
\bigskip

\noindent{\bf Scale factor.} We assume that the scale factor $R$ satisfies
\begin{equation}\label{scale factor}
R(0)=1\quad\text{and}\quad R'(t)\geq 0\quad\text{with}\quad \lim_{t\to\infty}R(t)=\infty.
\end{equation}

\noindent{\bf Weight function.}
We choose the weight function as $e^{|v|^2}$ and define
\begin{equation}\label{norm}
\|f(t)\|:=\sup\left\{
\left|e^{|v|^2}\partial_{v^k}^jf(\tau,v)\right|:
0\leq \tau\leq t,~v\in\bbr^3,~j=0,1,~k=1,2,3\right\}.
\end{equation}

\noindent{\bf Scattering kernel.}
We assume that the scattering kernel satisfies the following conditions:
for some positive constant $A$ and $0\leq b<3$,
\begin{align}\label{scattering 1}
0\leq\sigma(g,\omega)\leq A(1+g^{-b})\sigma_0(\omega)\quad\text{and}\quad
|\partial_g\sigma(g,\omega)|\leq Ag^{-b-1}\sigma_0(\omega),
\end{align}
where $\sigma_0$ is bounded and supported in $\bbs^2_R$ such that
\begin{equation}\label{scattering 2}
0\leq\sigma_0(\omega)\leq\sigma_1\mbox{\bf 1}_{\bbs^2_R}(\omega)
\end{equation}
for some positive constanat $\sigma_1$. Here,
$\mbox{\bf 1}_{\bbs^2_R}(\omega)$ is the indicator function of the set
$\bbs^2_R$, which is defined as follows:
for some positive constant $B$, we define
\[
\bbs^2_R:=\left\{\omega\in \bbs^2:
\frac{|v-u|^2|n\times\omega|^2}{2R^2s}\leq B\right\},
\]
where $n$ and $s$ are defined by \eqref{n^alpha} and \eqref{g and s in v} respectively.

We introduced a cutoff set $\bbs^2_R$ on the angular part of the scattering kernel.
It depends on $v$, $u$, and $t$, but we can see that for each $v$ and $u$ the cutoff set
$\bbs^2_R$ converges to $\bbs^2$, since $s$ has a lower bound as in \eqref{elementary 1}
and $R$ is an increasing function of $t$. More explicitly, we can find a finite $t_0$
such that we have $\bbs^2_R=\bbs^2$ for $t\geq t_0$, hence the restriction
disappears for large $t$. A similar restriction on the scattering kernel can be found
in a different physical situation as in \cite{S101}, where the author studied the Newtonian
limit of the Boltzmann equation and the speed of light $c$ was treated as a parameter.
To first prove global existence
of solutions, the author introduced a cutoff set $\mathcal{B}_c$
and showed that the cutoff set $\mathcal{B}_c$ converges to $\bbs^2$ for large $c$
(see Lemma 3.1 of \cite{S101}).
This restriction to $\mathcal{B}_c$ was crucial in the proof of the existence theorem
of \cite{S101}, and in the present paper
the cutoff set $\bbs^2_R$ will play the role similar to $\mathcal{B}_c$.
In Section \ref{Sec weight function}, we will see that the weight function $e^{|v|^2}$
works well in our case under the restriction of \eqref{scattering 2}.

In this paper, we will use two different representations of post-collisional momenta,
and then for a given $v'$ two different unit vectors $\omega$ and $\hat{\omega}$
will be used in the representations \eqref{v' R} and \eqref{v' RS}, respectively.
Therefore, the scattering kernel $\sigma$ in principle should be given in
different forms, for instance $\sigma(g,\omega)$ and $\hat{\sigma}(g,\hat{\omega})$ respectively.
However, in this paper we will assume that both $\sigma$ and $\hat{\sigma}$ satisfy
\eqref{scattering 1} and \eqref{scattering 2} and will not distinguish $\sigma$ and
$\hat{\sigma}$. Note that $\omega$ and $\hat{\omega}$ are related to each other by
\eqref{relation between omega}, and we can observe from \eqref{relation between omega 2} that
more post-collisional momenta are cut off in the case of \eqref{v' RS}
under the restriction \eqref{scattering 2}.

\subsection{Basic lemmas}
In this part we collect elementary lemmas which are proved by direct calculations.
The proofs of the following lemmas are almost same with the Minkowski case,
but we present them for the reader's convenience.

\begin{lemma}\label{Lem elementary}
The following estimates hold for the quantities defined in the previous sections:
\begin{gather}
s=4+g^2,\quad
2\leq \sqrt{s},\quad
g\leq\sqrt{s},\label{elementary 1}\\
g\leq\sqrt{s}\leq 2\sqrt{v^0u^0},\label{elementary 2}\\
\frac{|v-u|}{\sqrt{v^0u^0}}\leq Rg\leq |v-u|,\label{elementary 3}\\
|v|\leq Rv^0,\quad v^0=\sqrt{1+R^{-2}|v|^2}\leq
\sqrt{1+|v|^2},\label{elementary 4}\\
Rg=|v-u|\sqrt{1-\frac{|v+u|^2\cos^2\theta_0}{R^2(v^0+u^0)^2}},\label{elementary 5}\\
\sqrt{s}\leq\sqrt{(n^0)^2-R^{-2}(n\cdot{\omega})^2},\label{elementary 6}\\
\sqrt{s}\geq \max\left\{\sqrt{\frac{v^0}{u^0}},\sqrt{\frac{u^0}{v^0}}\right\},
\label{elementary 7}
\end{gather}
where $\theta_0$ is the angle between $v+u$ and $v-u$.
\end{lemma}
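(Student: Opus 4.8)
The plan is to reduce all seven estimates to a short list of algebraic identities obtained from the mass-shell relation $v^0=\sqrt{1+R^{-2}|v|^2}$ in \eqref{v} together with the definitions in \eqref{g and s in v} and \eqref{n^alpha}. First I would record, by expanding the squares and using $(v^0)^2-R^{-2}|v|^2=1$,
\[
s=4+g^2=2+2v^0u^0-2R^{-2}(v\cdot u)=(v^0+u^0)^2-R^{-2}|v+u|^2=(n^0)^2-R^{-2}|n|^2,
\]
and
\[
R^2g^2=|v-u|^2-R^2(v^0-u^0)^2.
\]
Introducing $\alpha:=R^{-1}|v|=\sqrt{(v^0)^2-1}$ and $\beta:=R^{-1}|u|=\sqrt{(u^0)^2-1}$, the Cauchy--Schwarz inequality reads $|R^{-2}(v\cdot u)|\le\alpha\beta$, while $v^0u^0=\sqrt{(1+\alpha^2)(1+\beta^2)}$ gives $(v^0u^0)^2-(1\pm\alpha\beta)^2=(\alpha\mp\beta)^2\ge0$, hence $v^0u^0\ge1+\alpha\beta$. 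These two facts drive everything.

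The cheaper estimates then follow at once. From $g^2=-2+2v^0u^0-2R^{-2}(v\cdot u)\ge-2+2(1+\alpha\beta)-2\alpha\beta=0$ one gets $s\ge4$, so $\sqrt s\ge2$ and $g\le\sqrt s$, which is \eqref{elementary 1}; the bound $s\le 4v^0u^0$ of \eqref{elementary 2} is the statement $1-R^{-2}(v\cdot u)\le v^0u^0$, again immediate from $v^0u^0\ge1+\alpha\beta$ and $R^{-2}(v\cdot u)\ge-\alpha\beta$. The mass-shell bounds \eqref{elementary 4} are $R^2(v^0)^2=R^2+|v|^2\ge|v|^2$ and $R\ge1$ (the latter from \eqref{scale factor}). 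Finally \eqref{elementary 6} follows from the identity $s=(n^0)^2-R^{-2}|n|^2$ together with $(n\cdot\omega)^2\le|n|^2$, and the upper bound $Rg\le|v-u|$ in \eqref{elementary 3} is read off directly from the identity for $R^2g^2$. For \eqref{elementary 5} I would compute $v^0-u^0=\frac{(v^0)^2-(u^0)^2}{v^0+u^0}=\frac{R^{-2}(|v|^2-|u|^2)}{v^0+u^0}$ and rewrite $|v|^2-|u|^2=(v-u)\cdot(v+u)=|v-u|\,|v+u|\cos\theta_0$; substituting into $R^2g^2=|v-u|^2-R^2(v^0-u^0)^2$ yields \eqref{elementary 5} and re-proves $Rg\le|v-u|$.

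The two estimates requiring real work are the lower bound in \eqref{elementary 3} and the bound \eqref{elementary 7}, and this is where I expect the main obstacle. In both cases $s$ and $|v-u|^2$ are monotone in $v\cdot u$ while the competing side is independent of the angle, so it suffices to treat the extremal configuration $v\cdot u=|v||u|$, i.e. $v,u$ parallel. For \eqref{elementary 3}, squaring reduces the claim to $(v^0u^0-1)|v-u|^2\ge v^0u^0R^2(v^0-u^0)^2$; in the parallel case, using $|v-u|^2=R^2(\alpha-\beta)^2$ and $v^0-u^0=\frac{(\alpha-\beta)(\alpha+\beta)}{v^0+u^0}$ and cancelling, it collapses to $(v^0u^0-1)(v^0+u^0)^2\ge v^0u^0(\alpha+\beta)^2$, which after expansion is exactly the AM--GM inequality $(\alpha u^0)^2+(\beta v^0)^2\ge 2\alpha\beta\,u^0v^0$ for the pair $\alpha u^0,\ \beta v^0$. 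For \eqref{elementary 7}, by symmetry it is enough to show $s\ge v^0/u^0$; taking the parallel case (which minimises $s$) and clearing denominators, the inequality $2v^0(u^0)^2-v^0+2u^0\ge2u^0\alpha\beta$ has a positive left side, and squaring leaves $(v^0)^2+4u^0v^0\big(2(u^0)^2-1\big)+4(u^0)^4\ge0$, which is manifest since $u^0\ge1$. The only genuine difficulty is organising these last two reductions cleanly; once the substitution $\alpha=\sqrt{(v^0)^2-1}$, $\beta=\sqrt{(u^0)^2-1}$ is in place, each collapses to a one-line positivity or AM--GM statement.
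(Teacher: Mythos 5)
Your proof is correct, and on the two nontrivial items it takes a genuinely different route from the paper. The easy parts (\eqref{elementary 1}, \eqref{elementary 2}, \eqref{elementary 4}--\eqref{elementary 6}) rest on the same identities the paper uses, merely repackaged through $\alpha=\sqrt{(v^0)^2-1}$, $\beta=\sqrt{(u^0)^2-1}$ and $v^0u^0\ge 1+\alpha\beta$. The divergence is in the lower bound of \eqref{elementary 3} and in \eqref{elementary 7}. There the paper never leaves the general configuration: for \eqref{elementary 3} it writes $g^2=2\bigl[(p^0q^0)^2-(1+R^2(p\cdot q))^2\bigr]/\bigl[p^0q^0+1+R^2(p\cdot q)\bigr]$ and bounds the numerator below by $R^2|p-q|^2$ (the discarded term being $R^4|p\times q|^2$), while for \eqref{elementary 7} the same difference-of-squares device gives $s\ge 2+\frac{1+\alpha^2+\beta^2}{v^0u^0}$, hence the \emph{additive} bound $s\ge \frac{v^0}{u^0}+\frac{u^0}{v^0}$, which is strictly stronger than the stated maximum. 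You instead observe that for fixed $|v|,|u|$ the relevant left-hand sides are monotone in $v\cdot u$ while the competing sides are angle-independent, reduce to the parallel configuration $v\cdot u=|v||u|$, and close with $(\alpha u^0-\beta v^0)^2\ge 0$ for \eqref{elementary 3} and an explicit positivity check after squaring for \eqref{elementary 7}; I verified both reductions, including the cancellation of $(\alpha-\beta)^2$ (with the degenerate case $\alpha=\beta$ trivial) and the final expansion $(v^0)^2+4u^0v^0(2(u^0)^2-1)+4(u^0)^4\ge 0$. The trade-off: the paper's factorization trick is shorter, avoids any case reduction, and yields the stronger two-term bound in \eqref{elementary 7}; your extremal-configuration argument is more mechanical but makes the equality structure visible and, via the Cauchy--Schwarz step $g^2\ge -2+2(1+\alpha\beta)-2\alpha\beta=0$, supplies the reality of $g$, which the paper takes for granted. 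You also silently corrected a sign typo in \eqref{g and s in v}: the correct identity is $g^2=-2+2v^0u^0-2R^{-2}(v\cdot u)$, as you use, not $-2-2v^0u^0+2R^{-2}(v\cdot u)$ as printed.
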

\begin{proof}
Since $s=-(p_\alpha+q_\alpha)(p^\alpha+q^\alpha)$
and $g=\sqrt{(p_\alpha-q_\alpha)(p^\alpha-q^\alpha)}$, we have
\[
s=2-2p_\alpha q^\alpha=4-2-2p_\alpha q^\alpha=4+g^2.
\]
The inequalities $2\leq\sqrt{s}$ and $g\leq\sqrt{s}$ are now clear.
The second inequality of \eqref{elementary 2} is obtained by
\begin{align*}
s&=2-2p_\alpha q^\alpha=2+2p^0q^0-2R^2(p\cdot q)\\
&\leq 2p^0q^0+2\sqrt{1+R^2|p|^2+R^2|q|^2+R^4|p|^2|q|^2}=4p^0q^0=4v^0u^0.
\end{align*}
For \eqref{elementary 3}, we observe that
\begin{align*}
&(p^0)^2(q^0)^2-(1+R^2(p\cdot q))^2\\
&=1+R^2|p|^2+R^2|q|^2+R^4|p|^2|q|^2-1-2R^2(p\cdot q)-R^4(p\cdot q)^2\\
&\geq R^2|p-q|^2,
\end{align*}
and this implies
\begin{align*}
g^2&=2p^0q^0-2\Big(1+R^2(p\cdot q)\Big)=2\frac{(p^0)^2(q^0)^2
-(1+R^2(p\cdot q))^2}{p^0q^0+1+R^2(p\cdot q)}\\
&\geq\frac{R^2|p-q|^2}{p^0q^0}=\frac{1}{R^2}\frac{|v-u|^2}{v^0u^0}.
\end{align*}
The second inequality of \eqref{elementary 3} comes from \eqref{elementary 5},
and this proves \eqref{elementary 3}. We have assumed that $R=R(t)$ is an increasing
function with $R(0)=1$, hence \eqref{elementary 4} is clear.
For \eqref{elementary 5}, we first note that
\[
R^2(v^0)^2-R^2(u^0)^2=|v|^2-|u|^2=(v-u)\cdot(v+u).
\]
Then, by a direct calculation we have
\begin{align*}
R^2g^2
&=-R^2(p^0-q^0)^2+R^4|p-q|^2
=|v-u|^2-R^2(v^0-u^0)^2\\
&=|v-u|^2-\left(\frac{(v-u)\cdot(v+u)}{R(v^0+u^0)}\right)^2\\
&=|v-u|^2\left(1-\frac{|v+u|^2\cos^2\theta_0}{R^2(v^0+u^0)^2}\right),
\end{align*}
where $\theta_0$ is the angle between $v+u$ and $v-u$.
This proves \eqref{elementary 5}. For \eqref{elementary 6},
we note that
\begin{align*}
s&=(p^0+q^0)^2-R^2|p+q|^2
=(v^0+u^0)^2-R^{-2}|v+u|^2\\
&=(n^0)^2-R^{-2}|n|^2
\leq (n^0)^2-R^{-2}(n\cdot{\omega})^2,
\end{align*}
and this proves \eqref{elementary 6}. For the last inequality \eqref{elementary 7},
we note that
\begin{align*}
s&=(n^0)^2-R^{-2}|n|^2
=(v^0)^2+2v^0u^0+(u^0)^2-R^{-2}|v|^2-2R^{-2}(v\cdot u)-R^{-2}|u|^2\\
&\geq 2+2\sqrt{1+R^{-2}|v|^2}\sqrt{1+R^{-2}|u|^2}-2R^{-2}|v||u|\\
&=2+2\frac{1+R^{-2}|v|^2+R^{-2}|u|^2}{\sqrt{1+R^{-2}|v|^2}\sqrt{1+R^{-2}|u|^2}
+R^{-2}|v||u|}\\
&\geq 2+\frac{1+R^{-2}|v|^2+R^{-2}|u|^2}{v^0u^0}
\geq\frac{v^0}{u^0}+\frac{u^0}{v^0},
\end{align*}
and this completes the proof of the lemma.
\end{proof}

\begin{lemma}\label{Lem derivatives}
The following estimates hold for the quantities defined in the previous sections:
\begin{gather}
\partial_{v^i}v^0=\frac{v^i}{R^2v^0},\quad
|\partial_{v^i}v^0|\leq \frac{1}{R},\label{derivative 1}\\
\partial_{v^i}g=\frac{u^0}{Rg}\left(\frac{v^i}{Rv^0}-\frac{u^i}{Ru^0}\right),\quad
|\partial_{v^i}g|\leq \frac{2u^0}{Rg},\label{derivative 2}\\
\partial_{v^i}\sqrt{s}=\frac{u^0}{R\sqrt{s}}\left(\frac{v^i}{Rv^0}-\frac{u^i}{Ru^0}\right),\quad
|\partial_{v^i}\sqrt{s}|\leq \frac{2u^0}{R\sqrt{s}},\label{derivative 3}\\
\partial_{v^i}\sqrt{(n^0)^2-R^{-2}(n\cdot\omega)^2}
=\frac{u^0}{R\sqrt{(n^0)^2-R^{-2}(n\cdot\omega)^2}}\left(
\frac{v^i}{Rv^0}-\frac{(u\cdot\omega)\omega^i}{Ru^0}\right)\label{derivative 4}\\
\hspace{5cm}+\frac{v^0}{R\sqrt{(n^0)^2-R^{-2}(n\cdot\omega)^2}}\left(
\frac{v^i}{Rv^0}-\frac{(v\cdot\omega)\omega^i}{Rv^0}\right),\cr
|\partial_{v^i}\sqrt{(n^0)^2-R^{-2}(n\cdot\omega)^2}|\leq
\frac{2u^0+2v^0}{R\sqrt{(n^0)^2-R^{-2}(n\cdot\omega)^2}},\label{derivative 5}\\
|\partial_{v^i}g|\leq\frac{u^0\sqrt{v^0u^0}}{R},\quad
|\partial_{v^i}\sqrt{s}|\leq\frac{u^0\sqrt{v^0u^0}}{R},\label{derivative 6}\\
\partial_{v^i}\left[\frac{(n\cdot\omega)n^k}{|n|^2}\right]
=\frac{\omega^in^k}{|n|^2}+\frac{(n\cdot\omega)\delta^{ik}}{|n|^2}
-2\frac{(n\cdot\omega)n^in^k}{|n|^4},\quad
\left|\partial_{v^i}\left[\frac{(n\cdot\omega)n^k}{|n|^2}\right]\right|
\leq\frac{3}{|n|}.\label{derivative 7}
\end{gather}
Note that \eqref{derivative 4} reduces to \eqref{derivative 3}
in the case of $(n\cdot\omega)^2=|n|^2$.
\end{lemma}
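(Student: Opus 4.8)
The plan is to obtain every identity by direct differentiation, using that $u$ and $\omega$ are held fixed (independent of $v$), and then to reduce each bound to the elementary inequalities of Lemma \ref{Lem elementary}. The backbone is the single relation $\partial_{v^i}v^0 = v^i/(R^2v^0)$, gotten by differentiating $(v^0)^2 = 1 + R^{-2}|v|^2$; its bound $|\partial_{v^i}v^0|\le 1/R$ is immediate from $|v|\le Rv^0$ in \eqref{elementary 4}. Throughout, the workhorses for the bounds will be $|v^i/(Rv^0)|\le 1$ and $|u^i/(Ru^0)|\le 1$, again from \eqref{elementary 4}, together with $\sqrt s\ge 2$ and $g\le\sqrt s$ from \eqref{elementary 1}.

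For \eqref{derivative 2} and \eqref{derivative 3} I would differentiate $g^2 = -(v^0-u^0)^2 + R^{-2}|v-u|^2$ and $s=4+g^2$, feeding in \eqref{derivative 1} through the chain rule; the $v^0$-terms collapse (the $\pm v^i$ cancel) to give $\partial_{v^i}g = \tfrac{u^0}{Rg}\big(\tfrac{v^i}{Rv^0}-\tfrac{u^i}{Ru^0}\big)$ and $\partial_{v^i}\sqrt s = \tfrac{g}{\sqrt s}\,\partial_{v^i}g$, after which $\big|\tfrac{v^i}{Rv^0}-\tfrac{u^i}{Ru^0}\big|\le 2$ yields the stated estimates. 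Identity \eqref{derivative 4} is the same computation applied to $W := (n^0)^2 - R^{-2}(n\cdot\omega)^2$, using $\partial_{v^i}n^0=\partial_{v^i}v^0$ and $\partial_{v^i}(n\cdot\omega)=\omega^i$; the only non-mechanical step is to regroup $n^0\tfrac{v^i}{R^2v^0} - R^{-2}(n\cdot\omega)\omega^i$ into the displayed $u^0$- and $v^0$-weighted pieces, and this is precisely the grouping that makes \eqref{derivative 5} fall out after bounding each parenthesis by $2$ via \eqref{elementary 4}.

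The genuinely delicate estimate is \eqref{derivative 6}: the formula in \eqref{derivative 2} carries a factor $1/g$ and is singular as $g\to 0$, whereas \eqref{derivative 6} must be free of that singularity. The key is an algebraic identity that exhibits a compensating factor of $g$ in the numerator. Writing $\partial_{v^i}g = \tfrac{1}{gR^2}\big(\tfrac{u^0v^i}{v^0}-u^i\big)$, I would bound the numerator by $|u^0 v - v^0 u|/v^0$ and then prove
\[
|u^0 v - v^0 u|^2 = R^2\big(u^0 v^0 g^2 - (v^0-u^0)^2\big)\le R^2 u^0 v^0\, g^2 ,
\]
obtained by expanding the left side and substituting $|v|^2 = R^2((v^0)^2-1)$, $|u|^2=R^2((u^0)^2-1)$ and $v\cdot u = R^2(v^0u^0-1-g^2/2)$ (the last extracted from $g^2$ above), whereupon the $(u^0v^0)^2$ terms cancel. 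Hence $|u^0 v - v^0 u|\le R\sqrt{u^0v^0}\,g$, the $g$'s cancel, and one is left with $|\partial_{v^i}g|\le \tfrac1R\sqrt{u^0/v^0}$, which is dominated by the stated $u^0\sqrt{v^0u^0}/R$ since $u^0v^0\ge 1$; the bound for $\partial_{v^i}\sqrt s$ then follows from $g/\sqrt s\le 1$.

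Finally, \eqref{derivative 7} is the quotient rule on $(n\cdot\omega)n^k/|n|^2$ using $\partial_{v^i}n^k=\delta^{ik}$, $\partial_{v^i}(n\cdot\omega)=\omega^i$ and $\partial_{v^i}|n|^2 = 2n^i$, giving the three displayed terms. For the bound I would estimate the first term by $1/|n|$ (using $|\omega^i|\le1$, $|n^k|\le|n|$) and group the remaining two as $\tfrac{n\cdot\omega}{|n|^2}\big(\delta^{ik}-\tfrac{2n^in^k}{|n|^2}\big)$, observing that the entries of the Householder-type factor $\delta^{ik}-2n^in^k/|n|^2$ are bounded by $1$ in absolute value (from $2|n^i||n^k|\le|n|^2$), so this piece contributes at most $1/|n|$; together these give the claimed $3/|n|$ with room to spare. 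The closing remark is checked by taking $\omega=n/|n|$: then $(n\cdot\omega)\omega^i=n^i$ and $W=(n^0)^2-R^{-2}|n|^2=s$ by the computation in \eqref{elementary 6}, and the direct formula for $\partial_{v^i}\sqrt W$ collapses to \eqref{derivative 3}. The main obstacle is exactly the identity removing the $1/g$ singularity in \eqref{derivative 6}; everything else is routine chain-rule bookkeeping controlled by \eqref{elementary 4}.
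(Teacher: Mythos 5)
Your proposal is correct, and for most of the lemma it coincides with the paper's proof, which likewise obtains the identities \eqref{derivative 1}--\eqref{derivative 4} and \eqref{derivative 7} by direct differentiation and reads off the routine bounds from $|v|\le Rv^0$ in \eqref{elementary 4}. The one place where you genuinely diverge is the delicate estimate \eqref{derivative 6}, where the $1/g$ singularity of \eqref{derivative 2} must be cancelled. The paper does this by viewing $\frac{v^i}{Rv^0}-\frac{u^i}{Ru^0}$ as a difference of values of the $1$-Lipschitz map $x\mapsto x/\sqrt{1+|x|^2}$ at $R^{-1}v$ and $R^{-1}u$, so that
\[
\left|\frac{v^i}{Rv^0}-\frac{u^i}{Ru^0}\right|\le R^{-1}|v-u|\le g\sqrt{v^0u^0},
\]
the last step being the lower bound of \eqref{elementary 3}; inserting this into \eqref{derivative 2} and \eqref{derivative 3} yields \eqref{derivative 6}. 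You instead prove the exact identity $|u^0v-v^0u|^2=R^2\bigl(v^0u^0g^2-(v^0-u^0)^2\bigr)$, which checks out upon expanding with $|v|^2=R^2((v^0)^2-1)$, $|u|^2=R^2((u^0)^2-1)$, and it buys you the slightly sharper bound $|\partial_{v^i}g|\le R^{-1}\sqrt{u^0/v^0}$, from which the stated bound follows since $v^0u^0\ge 1$. The two mechanisms are close cousins --- the paper's \eqref{elementary 3} is itself proved by an analogous algebraic expansion --- but yours is self-contained and quantitatively stronger. An incidental merit of your computation: the relation $v\cdot u=R^2(v^0u^0-1-g^2/2)$ you use silently corrects a sign typo in the paper's formula \eqref{g and s in v} (as printed, $g=\sqrt{-2-2v^0u^0+2R^{-2}(v\cdot u)}$ has a negative radicand; the correct expansion is $-2+2v^0u^0-2R^{-2}(v\cdot u)$). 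Your Householder-factor argument for \eqref{derivative 7}, giving $2/|n|$ rather than the stated $3/|n|$, and your verification of the closing remark by taking $\omega=\pm n/|n|$ (so that $(n^0)^2-R^{-2}(n\cdot\omega)^2=s$) are likewise correct.
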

\begin{proof}
The first identities of \eqref{derivative 1}--\eqref{derivative 4}
are direct calculations from the definitions of $v^0$, $g$, and $s$.
The inequality of \eqref{derivative 1} is clear. The inequalities
of \eqref{derivative 2}, \eqref{derivative 3}, and \eqref{derivative 5}
come from \eqref{derivative 1}. For the inequalities \eqref{derivative 6}, we note that
\[
\left|\frac{R^{-1}v^i}{\sqrt{1+R^{-2}|v|^2}}-\frac{R^{-1}u^i}{\sqrt{1+R^{-2}|u|^2}}\right|
\leq R^{-1}|v-u|\leq g\sqrt{v^0u^0},
\]
and applying it to \eqref{derivative 2} and \eqref{derivative 3} we obtain
\eqref{derivative 6}. The estimate \eqref{derivative 7} is an elementary calculation,
and this completes the proof of the lemma.
\end{proof}
\begin{lemma}\label{Lem integral}
The following integral for $0\leq\alpha<3$ is estimated as
\[
\int_{\bbr^3}|v-u|^{-\alpha}e^{-|u|^2}du\leq C_\alpha(1+|v|^2)^{-\frac{\alpha}{2}},
\]
where $C_\alpha$ is a positive constant depending on $\alpha$.
\end{lemma}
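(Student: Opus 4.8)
The plan is to prove the weighted bound by showing that the integral decays like $|v|^{-\alpha}$ as $|v|\to\infty$ while staying bounded near $v=0$, and then comparing with $(1+|v|^2)^{-\alpha/2}$, which has exactly the same behaviour. The only structural feature I will exploit is that the singularity $|v-u|^{-\alpha}$ is locally integrable in $\bbr^3$ precisely because $\alpha<3$ (so that $r^{2-\alpha}$ is integrable at $r=0$), together with the fact that the Gaussian $e^{-|u|^2}$ controls all large-$u$ tails. Accordingly I would split the $u$-integration into a region near $v$ and a region away from $v$, and treat the cases $|v|\geq 1$ and $|v|\leq 1$ separately.

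First I would treat the case $|v|\geq 1$. Writing $w=u-v$, I split at radius $|v|/2$. On the inner region $|w|\leq |v|/2$ one has $|u|\geq |v|-|w|\geq |v|/2$, so $e^{-|u|^2}\leq e^{-|v|^2/4}$, while $\int_{|w|\leq|v|/2}|w|^{-\alpha}\,dw$ equals a constant times $|v|^{3-\alpha}$; this contributes at most $Ce^{-|v|^2/4}|v|^{3-\alpha}$, which is exponentially small and hence dominated by $(1+|v|^2)^{-\alpha/2}$. On the outer region $|w|>|v|/2$ one bounds $|v-u|^{-\alpha}\leq (|v|/2)^{-\alpha}$ and pulls it out, leaving the full Gaussian integral, so this piece is at most $2^{\alpha}|v|^{-\alpha}\int_{\bbr^3}e^{-|u|^2}\,du$. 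Adding the two contributions and using $|v|^{-\alpha}\leq 2^{\alpha/2}(1+|v|^2)^{-\alpha/2}$ for $|v|\geq1$ gives the claim in this range.

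Next I would treat the bounded case $|v|\leq 1$. Here it suffices to show the integral is bounded by a constant independent of $v$, since $(1+|v|^2)^{-\alpha/2}\geq 2^{-\alpha/2}$ on this range. Splitting once more into $|u-v|\leq 1$ and $|u-v|>1$, the first piece is bounded by $\int_{|w|\leq1}|w|^{-\alpha}\,dw<\infty$ (finite precisely because $\alpha<3$), while on the second piece $|v-u|^{-\alpha}\leq 1$ and the Gaussian is integrable; both bounds are uniform in $v$.

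I do not expect a genuine obstacle here: the argument is entirely elementary once the two-region decomposition is fixed, and the single essential analytic input is the local integrability of $|w|^{-\alpha}$ in three dimensions, which is exactly the restriction $\alpha<3$. The only point requiring a little care is tracking constants so that the final bound matches across the two cases $|v|\leq1$ and $|v|\geq1$; one can absorb everything into a single $C_\alpha$ depending only on $\alpha$ and the fixed Gaussian mass $\int_{\bbr^3}e^{-|u|^2}\,du$.
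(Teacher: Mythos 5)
Your proof is correct and follows essentially the same route as the paper: a decomposition at radius $|v|/2$ for large $|v|$, with the far region bounded by $C|v|^{-\alpha}$ times the Gaussian mass and the near region by $C|v|^{3-\alpha}e^{-|v|^2/4}$, plus a uniform bound for bounded $|v|$ using local integrability of $|w|^{-\alpha}$ (the condition $\alpha<3$). The only difference is cosmetic: you spell out the small-$|v|$ case and the constant-matching more explicitly than the paper does.
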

\begin{proof}
The quantity $|v-u|^{-\alpha}$ is integrable near $v\approx u$ for $\alpha<3$,
hence the above integral
is bounded for small $v$. Then, for large $v$ we separate the domain as
\begin{align*}
\int_{\bbr^3}\frac{1}{|v-u|^\alpha}e^{-|u|^2}du
&\leq\int_{\left\{|v-u|\geq\frac{|v|}{2}\right\}}\cdots\,du
+\int_{\left\{|v-u|\leq\frac{|v|}{2}\right\}}\cdots\,du,
\end{align*}
and note that $|u|\geq\frac{1}{2}|v|$ in the second case.
We estimate them as follows:
\begin{align*}
\int_{\bbr^3}\frac{1}{|v-u|^\alpha}e^{-|u|^2}du
&\leq C_\alpha |v|^{-\alpha}
+e^{-\frac{1}{4}|v|^2}\int_{\left\{|v-u|\leq\frac{|v|}{2}\right\}}
\frac{1}{|v-u|^\alpha}\,du\\
&\leq C_\alpha |v|^{-\alpha}
+C_\alpha |v|^{3-\alpha} e^{-\frac{1}{4}|v|^2}
\leq C_\alpha |v|^{-\alpha}.
\end{align*}
Combining this estimate with the case of small $v$, we obtain the desired result.
\end{proof}

\section{Global existence}\label{Sec global existence}\setcounter{equation}{0}
In this section, we prove global existence of classical solutions to the
Boltzmann equation \eqref{boltzmann}. We first make some remark about the weight function
and then establish several pointwise estimates.
The global existence of classical solutions will be proved in Section
\ref{Sec global existence proof}, and some discussions on the result will be given in
Section \ref{Sec discussions}.

\subsection{Weight function}\label{Sec weight function}
In this part we make some remark about the weight function that we have
chosen in \eqref{norm}. Note that
the energy conservation between colliding particles
can be written as
\begin{equation}\label{energy conserv}
v'^0+u'^0=v^0+u^0,
\end{equation}
and one may try to estimate the equation \eqref{boltzmann} as
\begin{align*}
e^{v^0}\partial_tf&=R^{-3}\iint\Big(\cdots\Big)
e^{-u^0}\Big(e^{v'^0}f(v')e^{u'^0}f(u')-e^{v^0}f(v)e^{u^0}f(u)\Big)\,d\omega\,du\\
&\leq CR^{-3}\|f(t)\|^2\int e^{-u^0}du.
\end{align*}
However, the integral in the last inequality gives a factor $R^3$ because
$u^0$ is defined by \eqref{v},
and then the right hand side will only be bounded by $\|f(t)\|^2$.
Hence, it is not easy to apply the argument of Illner and Shinbrot \cite{IS84},
because integrability is not guaranteed.
Instead, if we use the weight function $e^{|v|^2}$, then the loss term is easily estimated as
\begin{align*}
e^{|v|^2}\partial_tf&=\mbox{(gain term)}
-R^{-3}\iint\Big(\cdots\Big)
e^{-|u|^2}\Big(e^{|v|^2}f(v)e^{|u|^2}f(u)\Big)\,d\omega\,du\\
&\leq \mbox{(gain term)}+CR^{-3}\|f(t)\|^2\int e^{-|u|^2}du,
\end{align*}
and we can apply the method of \cite{IS84} in the case that $R$
grows fast enough such that
$R^{-3}$ is integrable. On the other hand,
as for the gain term,
the energy conservation \eqref{energy conserv} does not apply,
thus we need the restriction on
the scattering kernel given in \eqref{scattering 2}.
Detailed calculations will be given in the following sections.

Another motivation for the weight function is that the representation of post-collisional
momenta \eqref{v' R} or \eqref{v' RS}
converges to that of the Newtonian case as $t$ tends to infinity.
From the definitions of $v^0$ and $u^0$ we observe that $n^0$ and $\sqrt{s}$
tend to $2$ as $t\to\infty$, and Lemma \ref{Lem elementary} shows that
for each $v$ and $u$ we have
\[
v'^k\to\frac{v^k+u^k}{2}+\frac{|v-u|}{2}\omega^k\quad\mbox{as}\quad t\to\infty.
\]
This implies that we have formally a new invariant:
\begin{equation}\label{invariant at infty}
|v'|^2+|u'|^2=|v|^2+|u|^2\quad\mbox{at}\quad t=\infty.
\end{equation}
On the other hand, since $R(0)=1$, we have the following:
\begin{equation}\label{invariant at 0}
\sqrt{1+|v'|^2}+\sqrt{1+|u'|^2}=\sqrt{1+|v|^2}+\sqrt{1+|u|^2}\quad\mbox{at}\quad t=0.
\end{equation}
We may now compare \eqref{invariant at infty} and \eqref{invariant at 0}
to conjecture that the energy conservation \eqref{energy conserv}
has some structure which becomes close to
the Newtonian case rather than the Minkowski case after a large time.
In this sense we choose the weight function as $e^{|v|^2}$.
Since \eqref{invariant at infty} does not apply at a finite time,
we need the restriction \eqref{scattering 2} on the scattering kernel,
and the following lemma shows that the restriction \eqref{scattering 2} is enough to control
the gain term.

\begin{lemma}\label{Lem energy conservation bound}
Let $v$ and $u$ be given.
Suppose that $v'$ and $u'$ are parametrized by \eqref{v' R}
or \eqref{v' RS} with a unit vector $\omega$ or $\hat{\omega}$
on $\bbs^2_R$.
Then, we have the following estimate:
\begin{equation}\label{energy conservation bound}
|v|^2+|u|^2-|v'|^2-|u'|^2\leq B,
\end{equation}
where $B$ is the constant given in \eqref{scattering 2}.
\end{lemma}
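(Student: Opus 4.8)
The plan is to reduce \eqref{energy conservation bound} to a purely algebraic inequality among the energies $v^0,u^0,v'^0,u'^0$ and then to invoke the definition of the cutoff set $\bbs^2_R$. The starting observation is that $|v|^2=R^2((v^0)^2-1)$ by \eqref{v}. Adding the analogous identities for $u,v',u'$, the additive constants cancel, so that
\begin{equation*}
|v|^2+|u|^2-|v'|^2-|u'|^2=R^2\big((v^0)^2+(u^0)^2-(v'^0)^2-(u'^0)^2\big).
\end{equation*}
Using the energy conservation \eqref{energy conserv} in the form $v'^0+u'^0=v^0+u^0=n^0$ together with the elementary identity $2(a^2+b^2)=(a+b)^2+(a-b)^2$, the common $(n^0)^2$ terms cancel and I obtain
\begin{equation*}
|v|^2+|u|^2-|v'|^2-|u'|^2=\frac{R^2}{2}\big((v^0-u^0)^2-(v'^0-u'^0)^2\big).
\end{equation*}
Hence everything is controlled by comparing the post- and pre-collisional energy gaps.

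Next I would read off the gaps from the representations. The $\pm\tfrac{g}{2}$ structure of \eqref{v' R} gives $v'^0-u'^0=\frac{g(n\cdot\omega)}{R\sqrt{(n^0)^2-R^{-2}(n\cdot\omega)^2}}$, while \eqref{v' RS} gives $v'^0-u'^0=\frac{g(n\cdot\hat\omega)}{R\sqrt s}$; for the pre-collisional gap I use \eqref{g and s in v} in the form $(v^0-u^0)^2=R^{-2}|v-u|^2-g^2$. Substituting these, and recalling $s=(n^0)^2-R^{-2}|n|^2$ together with $(n^0)^2-R^{-2}(n\cdot\omega)^2=s+R^{-2}|n\times\omega|^2$ (valid since $|\omega|=1$), reduces the right-hand side to a single rational expression in $g,s,|n|,|v-u|$ and the angular factor $|n\times\omega|$ (or $|n\times\hat\omega|$).

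The crux is to show this expression is controlled by the quantity defining $\bbs^2_R$. After clearing denominators in the case \eqref{v' R}, the expression equals $\frac{1}{2P}\big(R^{-2}|v-u|^2|n\times\omega|^2+(sR^2(v^0-u^0)^2-g^2|n|^2)\big)$ with $P=(n^0)^2-R^{-2}(n\cdot\omega)^2$, and the decisive point is that the remainder is nonpositive. Indeed, the identity $n^0(v^0-u^0)=R^{-2}(|v|^2-|u|^2)=R^{-2}(v-u)\cdot n$, immediate from $v^0=\sqrt{1+R^{-2}|v|^2}$, lets me rewrite it as $R^{-2}\big(((v-u)\cdot n)^2-|v-u|^2|n|^2\big)\le 0$ by Cauchy--Schwarz. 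Discarding this term and using $P\ge s$ from \eqref{elementary 6} collapses the bound to $\frac{|v-u|^2|n\times\omega|^2}{2R^2s}$, which is $\le B$ precisely because $\omega\in\bbs^2_R$. The case \eqref{v' RS} runs in parallel: there $(v'^0-u'^0)^2=\frac{g^2(n\cdot\hat\omega)^2}{R^2s}$, so the analogous reduction yields $\frac{1}{2s}\big(g^2|n\times\hat\omega|^2+(sR^2(v^0-u^0)^2-g^2|n|^2)\big)$; dropping the same nonpositive remainder and then using $g^2\le R^{-2}|v-u|^2$ from \eqref{elementary 3} again produces $\frac{|v-u|^2|n\times\hat\omega|^2}{2R^2s}\le B$.

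The main obstacle is entirely bookkeeping: carrying out the cancellations cleanly and, above all, correctly isolating the cross term $sR^2(v^0-u^0)^2-g^2|n|^2$ and certifying its sign by Cauchy--Schwarz. Once that sign is secured, the definition of $\bbs^2_R$ closes both representations in the same way, and no finer information about the collision geometry is required.
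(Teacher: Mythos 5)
Your proof is correct and follows essentially the same route as the paper's: your reduction $\frac{R^2}{2}\big((v^0-u^0)^2-(v'^0-u'^0)^2\big)$ is algebraically identical to the paper's $2R^2(v'^0u'^0-v^0u^0)$, and your Cauchy--Schwarz certificate $((v-u)\cdot n)^2-|v-u|^2|n|^2\leq 0$ is exactly the paper's $\cos^2\theta_0\leq 1$ step, packaged there through the identity \eqref{elementary 5}. The closing moves likewise coincide --- $P\geq s$ via \eqref{elementary 6} in the case \eqref{v' R}, the bound $Rg\leq|v-u|$ via \eqref{elementary 3} in the case \eqref{v' RS}, and then the definition of $\bbs^2_R$ --- so the two arguments differ only in bookkeeping.
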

\begin{proof}
We first consider the case of \eqref{v' R}.
The quadratic quantity in \eqref{energy conservation bound} can be written as
\begin{align}\label{quadratic quantity}
|v|^2+|u|^2-|v'|^2-|u'|^2=2R^2(v'^0u'^0-v^0u^0),
\end{align}
where we used the energy conservation $v'^0+u'^0=v^0+u^0$.
We apply \eqref{v' R} to get
\begin{align*}
v'^0u'^0=\frac{(v^0)^2}{4}+\frac{(u^0)^2}{4}+\frac{v^0u^0}{2}
-\frac{g^2(n\cdot\omega)^2}{4R^2((n^0)^2-R^{-2}(n\cdot\omega)^2)},
\end{align*}
and then
\begin{align*}
2R^2(v'^0u'^0-v^0u^0)
&=2R^2\left(\frac{(v^0-u^0)^2}{4}
-\frac{g^2(n\cdot\omega)^2}{4R^2((n^0)^2-R^{-2}(n\cdot\omega)^2)}\right)\\
&=\frac{1}{2}\left(Rv^0-Ru^0\right)^2
-\frac{(Rg)^2(n\cdot\omega)^2}{2((Rn^0)^2-(n\cdot\omega)^2)}.
\end{align*}
The first quantity above is written as
\[
\frac{1}{2}\left(Rv^0-Ru^0\right)^2
=\frac{1}{2}\left(\frac{|v|^2-|u|^2}{Rv^0+Ru^0}\right)^2
=\frac{1}{2}\frac{((v+u)\cdot(v-u))^2}{R^2(n^0)^2}
=\frac{|v-u|^2}{2}\frac{|n|^2\cos^2\theta_0}{R^2(n^0)^2},
\]
and the second quantity is written by
\begin{align*}
\frac{(Rg)^2(n\cdot\omega)^2}{2((Rn^0)^2-(n\cdot\omega)^2)}
=\frac{|v-u|^2}{2}\left(1-\frac{|n|^2\cos^2\theta_0}{R^2(n^0)^2}\right)
\frac{(n\cdot\omega)^2}{(Rn^0)^2-(n\cdot\omega)^2},
\end{align*}
where we used \eqref{elementary 5}. Then, \eqref{quadratic quantity} is
estimated as
\begin{align*}
&|v|^2+|u|^2-|v'|^2-|u'|^2\\
&=\frac{|v-u|^2}{2}\left(\frac{|n|^2\cos^2\theta_0}{R^2(n^0)^2}
\left(1+\frac{(n\cdot\omega)^2}{(Rn^0)^2-(n\cdot\omega)^2}\right)
-\frac{(n\cdot\omega)^2}{(Rn^0)^2-(n\cdot\omega)^2}\right)\\
&=\frac{|v-u|^2}{2}\left(\frac{|n|^2\cos^2\theta_0}
{(Rn^0)^2-(n\cdot\omega)^2}
-\frac{(n\cdot\omega)^2}{(Rn^0)^2-(n\cdot\omega)^2}\right)
\leq\frac{|v-u|^2}{2}\frac{|n\times\omega|^2}{R^2s}\leq B,
\end{align*}
where we used the assumption \eqref{scattering 2}.
In the second case, we apply \eqref{v' RS} to the quadratic quantity
in \eqref{quadratic quantity} to obtain
\begin{align*}
2R^2(v'^0u'^0-v^0u^0)
&=2R^2\left(\frac{(v^0-u^0)^2}{4}-\frac{g^2(n\cdot\hat{\omega})^2}{4R^2s}\right)\\
&=\frac{1}{2}(Rv^0-Ru^0)^2-\frac{(Rg)^2(n\cdot\hat{\omega})^2}{2R^2s}.
\end{align*}
The first quantity is the same as above,
\[
\frac{1}{2}\left(Rv^0-Ru^0\right)^2
=\frac{|v-u|^2}{2}\frac{|n|^2\cos^2\theta_0}{R^2(n^0)^2},
\]
while the second quantity is estimated as
\begin{align*}
\frac{(Rg)^2(n\cdot\hat{\omega})^2}{2R^2s}
&=\frac{|v-u|^2}{2}\left(1-\frac{|n|^2\cos^2\theta_0}{R^2(n^0)^2}\right)
\frac{(n\cdot\hat{\omega})^2}{R^2s}\\
&=\frac{|v-u|^2}{2}\frac{(R^2(n^0)^2-|n|^2\cos^2\theta_0)}{R^2(n^0)^2}
\frac{(n\cdot\hat{\omega})^2}{R^2s}
\geq\frac{|v-u|^2}{2}\frac{(n\cdot\hat{\omega})^2}{R^2(n^0)^2}.
\end{align*}
Then, \eqref{quadratic quantity} is estimated as
\begin{align*}
|v|^2+|u|^2-|v'|^2-|u'|^2
&\leq\frac{|v-u|^2}{2}
\frac{(|n|^2\cos^2\theta_0-(n\cdot\hat{\omega})^2)}{R^2(n^0)^2}\\
&\leq\frac{|v-u|^2}{2}
\frac{|n\times\hat{\omega}|^2}{R^2(n^0)^2}
\leq\frac{|v-u|^2}{2}
\frac{|n\times\hat{\omega}|^2}{R^2s}\leq B.
\end{align*}
This completes the proof of the lemma.
\end{proof}

\subsection{Pointwise estimates}\label{Sec pointwise estimates}
In this part, we collect several pointwise estimates which will be used
in the following section for the global existence theorem.
The following lemma is trivial from the assumption on the angular part
of the scattering kernel.
\begin{lemma}\label{Lem integral trivial}
Suppose that $\sigma_0(\omega)$ satisfies \eqref{scattering 2}. Then, we have
\[
\iint\sigma_0(\omega)e^{-|u|^2}d\omega\,du\leq C.
\]
\end{lemma}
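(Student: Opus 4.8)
The plan is to observe that, once the pointwise bound on $\sigma_0$ is used, the double integral factorizes completely and each factor is elementary. First I would invoke \eqref{scattering 2}, which gives
\[
0\leq\sigma_0(\omega)\leq\sigma_1\mbox{\bf 1}_{\bbs^2_R}(\omega)\leq\sigma_1,
\]
where the final step simply discards the indicator using $\mbox{\bf 1}_{\bbs^2_R}\leq 1$ pointwise. This is the key simplification: it removes the dependence of the cutoff set $\bbs^2_R$ on $v$, $u$, and $t$, so that the integrand is bounded by a product of a function of $\omega$ (a constant) and a function of $u$ alone.

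Next I would separate the variables. The integration runs over $\omega\in\bbs^2$ and $u\in\bbr^3$ as in \eqref{boltzmann}, so
\[
\iint\sigma_0(\omega)e^{-|u|^2}\,d\omega\,du\leq\sigma_1\int_{\bbs^2}d\omega\int_{\bbr^3}e^{-|u|^2}\,du.
\]
The angular factor is the surface area of the sphere, $\int_{\bbs^2}d\omega=4\pi$, and the momentum factor is the standard Gaussian integral, $\int_{\bbr^3}e^{-|u|^2}\,du=\pi^{3/2}$. Multiplying these yields the explicit constant $C=4\pi^{5/2}\sigma_1$, which depends only on $\sigma_1$ from \eqref{scattering 2}.

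There is essentially no obstacle in this argument, which is why the lemma is flagged as trivial. The only subtlety worth recording is that the cutoff set $\bbs^2_R$ genuinely depends on $v$, $u$, and $t$ through the defining inequality $|v-u|^2|n\times\omega|^2/(2R^2s)\leq B$, and one might be tempted to track that dependence. The point is that no such tracking is necessary here: bounding $\mbox{\bf 1}_{\bbs^2_R}\leq 1$ before integrating decouples everything and produces a uniform constant. A sharper, geometry-dependent bound could be obtained by integrating over $\bbs^2_R$ rather than over all of $\bbs^2$, but this refinement is not needed for the subsequent Illner--Shinbrot estimates.
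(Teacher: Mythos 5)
Your proof is correct and takes essentially the same approach as the paper, whose entire proof is the one-line observation that $\sigma_0$ is bounded; you have simply made the implicit computation explicit, discarding the indicator via $\mbox{\bf 1}_{\bbs^2_R}\leq 1$ and evaluating the factored integrals to get $C=4\pi^{5/2}\sigma_1$. No further comment is needed.
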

\begin{proof}
Since $\sigma_0(\omega)$ is bounded, the lemma is clear.
\end{proof}
\begin{lemma}\label{Lem integral 2}
For $0\leq \beta<4$, we have the following estimate:
\[
\int_{\bbr^3}v_\phi g^{-\beta}e^{-|u|^2}du\leq
\left\{
\begin{aligned}
&C\quad&\mbox{for}\quad 0\leq \beta\leq 1,\\
&CR^{\beta-1}\quad&\mbox{for}\quad 1\leq \beta<4,
\end{aligned}
\right.
\]
where $C$ is a positive constant depending on $\beta$.
\end{lemma}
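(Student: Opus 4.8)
The plan is to reduce everything to the elementary bounds of Lemma~\ref{Lem elementary} together with the singular-integral estimate of Lemma~\ref{Lem integral}, after first rewriting the integrand as
\[
v_\phi g^{-\beta}=\frac{g^{1-\beta}\sqrt{s}}{v^0u^0}.
\]
The exponent $1-\beta$ changes sign at $\beta=1$, which is exactly where the two regimes in the statement separate, so I would split the argument there and estimate $g$ from above in one case and from below in the other.

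For $0\le\beta\le1$ the power of $g$ is nonnegative, so I estimate $g$ from above. Using $g\le\sqrt{s}$ from \eqref{elementary 1} and then $\sqrt{s}\le2\sqrt{v^0u^0}$ from \eqref{elementary 2} gives $g^{1-\beta}\sqrt{s}\le(\sqrt{s})^{2-\beta}\le(2\sqrt{v^0u^0})^{2-\beta}$. Dividing by $v^0u^0$ leaves $2^{2-\beta}(v^0u^0)^{-\beta/2}\le2^{2-\beta}$, since $v^0,u^0\ge1$ and $\beta\ge0$. The integrand is thus pointwise bounded by a constant, and integrating $e^{-|u|^2}$ over $\bbr^3$ yields the bound $C$.

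For $1\le\beta<4$ the power of $g$ is nonpositive, so I estimate $g$ from below via $Rg\ge|v-u|/\sqrt{v^0u^0}$ in \eqref{elementary 3}, which gives
\[
g^{1-\beta}\le R^{\beta-1}(v^0u^0)^{(\beta-1)/2}|v-u|^{-(\beta-1)}.
\]
Combining this with $\sqrt{s}/(v^0u^0)\le2/\sqrt{v^0u^0}$ (again from \eqref{elementary 2}) produces the pointwise bound
\[
v_\phi g^{-\beta}\le 2R^{\beta-1}(v^0u^0)^{(\beta-2)/2}|v-u|^{-(\beta-1)}.
\]
This already isolates the expected factor $R^{\beta-1}$, and since $\beta-1<3$, Lemma~\ref{Lem integral} with $\alpha=\beta-1$ handles the singular factor $|v-u|^{-(\beta-1)}$, contributing a decay $(1+|v|^2)^{-(\beta-1)/2}$ after integration in $u$.

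The delicate point, and the main obstacle, is the weight $(v^0u^0)^{(\beta-2)/2}$, whose exponent becomes positive once $\beta>2$, so one must check the final bound is uniform in $v$. When $\beta\le2$ the exponent is nonpositive and, since $v^0,u^0\ge1$, the factor is simply at most $1$. When $2<\beta<4$ I would instead absorb the $u^0$ part into the Gaussian—by \eqref{elementary 4} it grows only polynomially, so $(u^0)^{(\beta-2)/2}e^{-|u|^2}\le Ce^{-|u|^2/2}$ and the same proof as Lemma~\ref{Lem integral} applies with the narrower Gaussian—and bound the $v^0$ part using $(v^0)^{(\beta-2)/2}\le(1+|v|^2)^{(\beta-2)/4}$ from \eqref{elementary 4}. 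Multiplying this growth $(1+|v|^2)^{(\beta-2)/4}$ against the decay $(1+|v|^2)^{-(\beta-1)/2}$ from Lemma~\ref{Lem integral} leaves the net exponent $-\beta/4\le0$, so the $v$-dependence disappears and only $CR^{\beta-1}$ survives. Verifying that these competing powers of $(1+|v|^2)$ balance for every admissible $\beta$ is the step I would check most carefully; it is also where the restriction $\beta<4$ enters, both through the integrability threshold $\alpha=\beta-1<3$ in Lemma~\ref{Lem integral} and through keeping $(\beta-2)/2<1$ so the $u^0$ growth stays sub-Gaussian.
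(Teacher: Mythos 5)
Your argument is correct and takes essentially the same route as the paper: the same split at $\beta=1$, the upper bounds $g\le\sqrt{s}\le2\sqrt{v^0u^0}$ for small $\beta$, the lower bound $Rg\ge|v-u|/\sqrt{v^0u^0}$ for large $\beta$, and Lemma~\ref{Lem integral} with the power balance $(1+|v|^2)^{\frac{\beta-2}{4}-\frac{\beta-1}{2}}=(1+|v|^2)^{-\frac{\beta}{4}}$ (the paper merely treats $1\le\beta\le2$ and $2\le\beta<4$ as two separate cases where you merge them, and you are in fact slightly more careful than the paper in spelling out the absorption of $(u^0)^{\frac{\beta-2}{2}}$ into the Gaussian). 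One small correction: the restriction $\beta<4$ enters only through the integrability threshold $\alpha=\beta-1<3$ in Lemma~\ref{Lem integral}; the absorption $(u^0)^{\frac{\beta-2}{2}}e^{-|u|^2}\le Ce^{-|u|^2/2}$ works for any polynomial power, so no condition of the form $\frac{\beta-2}{2}<1$ is actually needed there.
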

\begin{proof}
Since the lower and upper bounds for $g$ are different \eqref{elementary 3},
we separate the cases as follows. Note that $R(t)$ is an increasing function.\bigskip

\noindent Case 1. ($0\leq \beta\leq 1$) By the definition of $v_\phi$
and \eqref{elementary 2}, we obtain
\begin{align*}
\int v_\phi g^{-\beta}e^{-|u|^2}du
&=\int \frac{g^{1-\beta}\sqrt{s}}{v^0u^0}e^{-|u|^2}du
\leq C\int \frac{(v^0u^0)^{\frac{1-\beta}{2}}(v^0u^0)^{\frac{1}{2}}}{v^0u^0}e^{-|u|^2}du\\
&\leq C\int (v^0u^0)^{-\frac{\beta}{2}}e^{-|u|^2}du\leq C.
\end{align*}

\noindent Case 2. ($1\leq \beta\leq 2$)
We use the lower bound for $g$ in \eqref{elementary 3}
with Lemma \ref{Lem integral} to estimate
\begin{align*}
\int v_\phi g^{-\beta}e^{-|u|^2}du
&=\int \frac{\sqrt{s}}{v^0u^0}\frac{1}{g^{\beta-1}}e^{-|u|^2}du
\leq C\int\frac{1}{\sqrt{v^0u^0}}
\frac{R^{\beta-1}(v^0u^0)^{\frac{\beta-1}{2}}}{|v-u|^{\beta-1}}e^{-|u|^2}du\\
&\leq CR^{\beta-1}\int\frac{1}{|v-u|^{\beta-1}}
\frac{1}{(v^0u^0)^{\frac{2-\beta}{2}}}e^{-|u|^2}du\\
&\leq CR^{\beta-1}\int\frac{1}{|v-u|^{\beta-1}}e^{-|u|^2}du
\leq C_\beta R^{\beta-1}(1+|v|^2)^{-\frac{\beta-1}{2}},
\end{align*}
where we used Lemma \ref{Lem integral}.

\noindent Case 3. ($2\leq \beta<4$)
We use \eqref{elementary 4} and Lemma \ref{Lem integral} to estimate
\begin{align*}
\int v_\phi g^{-\beta}e^{-|u|^2}du
&=\int \frac{\sqrt{s}}{v^0u^0}\frac{1}{g^{\beta-1}}e^{-|u|^2}du
\leq C\int\frac{1}{\sqrt{v^0u^0}}
\frac{R^{\beta-1}(v^0u^0)^{\frac{\beta-1}{2}}}{|v-u|^{\beta-1}}e^{-|u|^2}du\\
&\leq CR^{\beta-1}\int\frac{1}{|v-u|^{\beta-1}}(v^0u^0)^{\frac{\beta-2}{2}}e^{-|u|^2}du\\
&\leq CR^{\beta-1}\int\frac{(1+|v|^2)^{\frac{\beta-2}{4}}
(1+|u|^2)^{\frac{\beta-2}{4}}}{|v-u|^{\beta-1}}
e^{-|u|^2}du\\
&\leq C_\beta R^{\beta-1}(1+|v|^2)^{\frac{\beta-2}{4}-\frac{\beta-1}{2}}
\leq C_\beta R^{\beta-1}(1+|v|^2)^{-\frac{\beta}{4}}.
\end{align*}
We combine the above estimates to complete the proof of the lemma.
\end{proof}

\begin{lemma}\label{Lem control derivatives GS}
Consider the representation for $v'$ in \eqref{v' R}. We have the following estimate:
\[
|\partial_{v^i}{v'^k}|\leq Cv^0(u^0)^4,
\]
where the constant $C$ does not depend on $R$.
\end{lemma}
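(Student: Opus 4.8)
The plan is to differentiate the explicit formula \eqref{v' R} for $v'^k$ term by term and bound each resulting factor using the elementary estimates already collected in Lemmas \ref{Lem elementary} and \ref{Lem derivatives}. Recall that
\[
v'^k=\frac{v^k+u^k}{2}+\frac{Rg}{2}\frac{n^0\omega^k}{\sqrt{(n^0)^2-R^{-2}(n\cdot\omega)^2}}.
\]
The first term differentiates trivially, since $\partial_{v^i}(v^k+u^k)/2=\delta^{ik}/2$, which is harmless. The work is in the second term. I would write it as a product $\tfrac{R}{2}\cdot g\cdot n^0\omega^k\cdot D^{-1}$, where $D:=\sqrt{(n^0)^2-R^{-2}(n\cdot\omega)^2}$, and apply the product/quotient rule to get four contributions according to which factor the derivative $\partial_{v^i}$ lands on.

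First I would record the pieces. For the $g$-factor I use the bound $|\partial_{v^i}g|\le 2u^0/(Rg)$ from \eqref{derivative 2}; for the $n^0=v^0+u^0$ factor I use $|\partial_{v^i}n^0|=|\partial_{v^i}v^0|\le 1/R$ from \eqref{derivative 1}; and for the denominator $D$ I use \eqref{derivative 5}, namely $|\partial_{v^i}D|\le (2u^0+2v^0)/(RD)$, which produces a factor $D^{-2}$ after the quotient rule. The plan is then to assemble these and repeatedly use two facts: the key lower bound $\sqrt{s}\le D$ from \eqref{elementary 6}, so that $D^{-1}\le s^{-1/2}\le 1/2$ by \eqref{elementary 1}, and the upper bound $g\le\sqrt{s}\le 2\sqrt{v^0u^0}$ from \eqref{elementary 2}. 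Using $D^{-1}\le s^{-1/2}$ to absorb the negative powers of $D$, each of the four terms collapses to a product of a few factors of $v^0$ and $u^0$ (and the harmless $|\omega^k|\le1$), with the powers of $R$ cancelling because every derivative supplies an $R^{-1}$ while the prefactor carries an $R$.

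The step I expect to be the main obstacle is the term in which $\partial_{v^i}$ hits $D^{-1}$, since the quotient rule generates $D^{-2}$ there; the numerator of that contribution is $\tfrac{R}{2}g\,n^0\omega^k\cdot(2u^0+2v^0)/(RD)$, giving an overall $g\,n^0(v^0+u^0)D^{-2}$, and I must check this is controlled by $Cv^0(u^0)^4$. Here I would bound $g\le 2\sqrt{v^0u^0}$, use $D^{-2}\le s^{-1}\le 1/4$ from \eqref{elementary 1}, and bound $n^0=v^0+u^0\le 2v^0u^0$ (which follows since $v^0,u^0\ge1$), so the term is dominated by a constant times $\sqrt{v^0u^0}\,(v^0u^0)^2=\sqrt{v^0u^0}\,(v^0)^2(u^0)^2$; comparing against the target $v^0(u^0)^4$ one checks the exponents are covered after a further crude bound such as $\sqrt{v^0u^0}\le v^0u^0$. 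The remaining three terms are strictly milder by the same reductions, so collecting them yields $|\partial_{v^i}v'^k|\le Cv^0(u^0)^4$ with $C$ independent of $R$, as claimed. The estimate is deliberately wasteful in the powers of $u^0$, which is acceptable because the Gaussian weight $e^{-|u|^2}$ in the eventual integration absorbs any polynomial growth in $u^0$.
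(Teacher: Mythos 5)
Your setup is the same as the paper's: differentiate the second term of \eqref{v' R} by the product/quotient rule and insert \eqref{derivative 1}, \eqref{derivative 6}, and \eqref{derivative 5}. But the way you absorb the denominator $D=\sqrt{(n^0)^2-R^{-2}(n\cdot\omega)^2}$ is too crude, and your final exponent check is false. Using only $D^{-1}\le s^{-1/2}\le 1/2$, already the contribution where the derivative hits $g$ is bounded by $u^0\sqrt{v^0u^0}\,(v^0+u^0)$, which for $u^0=1$ and $|v|$ large grows like $(v^0)^{3/2}$ and is therefore \emph{not} dominated by $Cv^0(u^0)^4$, which is linear in $v^0$. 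The same failure occurs at the step you flag as the main obstacle: your bound $\sqrt{v^0u^0}\,(v^0u^0)^2=(v^0)^{5/2}(u^0)^{5/2}$ satisfies $(v^0)^{5/2}(u^0)^{5/2}\le Cv^0(u^0)^4$ only if $(v^0)^{3/2}\le C(u^0)^{3/2}$, which is false since $v$ and $u$ are independent (take $u=0$, $|v|\to\infty$); your suggested extra step $\sqrt{v^0u^0}\le v^0u^0$ only makes it worse, giving $(v^0)^3(u^0)^3$. The whole point of the asymmetric bound $Cv^0(u^0)^4$ is that exactly one power of $v^0$ survives, and no surplus of $u^0$-powers can compensate excess powers of $v^0$ (the Gaussian weight absorbs $u^0$, never $v^0$).

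The missing ingredient is \eqref{elementary 7}, $\sqrt{s}\ge\max\bigl\{\sqrt{v^0/u^0},\sqrt{u^0/v^0}\bigr\}$, which you never invoke. Combined with \eqref{elementary 6} it gives $D^{-1}\le s^{-1/2}\le\sqrt{u^0/v^0}$, so each factor of $D^{-1}$ trades a $\sqrt{v^0}$ for a $\sqrt{u^0}$; this is precisely how the paper keeps the $v^0$-power down to one. With it, the three nontrivial terms are bounded by $C(v^0(u^0)^2+(u^0)^3)$, $Cu^0$, and $C(v^0(u^0)^2+(u^0)^3+(u^0)^4/v^0)$ respectively, all dominated by $Cv^0(u^0)^4$ since $v^0,u^0\ge 1$. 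One further bookkeeping slip: when the derivative lands on $D^{-1}$, the quotient rule together with \eqref{derivative 5} produces $D^{-3}$, not $D^{-2}$; this is harmless (indeed helpful), and the paper spends those three powers as $(u^0/v^0)\cdot\sqrt{u^0/v^0}$ via \eqref{elementary 7}. With that lemma installed your computation goes through and coincides with the paper's proof; without it, the stated bound is not reached.
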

\begin{proof}
For simplicity, let $r$ denote
\[
r=\sqrt{(n^0)^2-R^{-2}(n\cdot\omega)^2}.
\]
Then, we take $v^i$-derivative on $v'^k$ to have
\begin{align*}
\partial_{v^i}{v'^k}&=\frac{\delta^{ik}}{2}
+\partial_{v^i}\left[\frac{Rg}{2}\frac{n^0\omega^k}{\sqrt{(n^0)^2
-R^{-2}(n\cdot\omega)^2}}\right]\\
&=\frac{\delta^{ik}}{2}
+\frac{R(\partial_{v^i}g)}{2}\frac{n^0\omega^k}{r}
+\frac{Rg}{2}\frac{(\partial_{v^i}v^0)\omega^k}{r}
-\frac{Rg}{2}\frac{n^0\omega^k}{r^2}(\partial_{v^i}r),
\end{align*}
where $\delta^{ik}$ is the Kronecker delta.
We now collect \eqref{derivative 6},
\eqref{elementary 6}, and \eqref{elementary 7} to estimate
\begin{align*}
\left|\frac{R(\partial_{v^i}g)}{2}\frac{n^0\omega^k}{r}\right|
&\leq\frac{R}{2}\frac{u^0\sqrt{v^0u^0}}{R}(v^0+u^0)
\sqrt{\frac{u^0}{v^0}}\leq C\Big(v^0(u^0)^2+(u^0)^3\Big),
\end{align*}
and \eqref{elementary 2}, \eqref{derivative 1}, \eqref{elementary 6},
and \eqref{elementary 7} to estimate
\begin{align*}
\left|\frac{Rg}{2}\frac{(\partial_{v^i}v^0)\omega^k}{r}\right|
\leq \frac{R}{2}2\sqrt{v^0u^0}\frac{1}{R}\sqrt{\frac{u^0}{v^0}}
\leq Cu^0,
\end{align*}
and \eqref{elementary 2}, \eqref{elementary 6}, \eqref{elementary 7},
and \eqref{derivative 5} to estiamte
\begin{align*}
\left|\frac{Rg}{2}\frac{n^0\omega^k}{r^2}(\partial_{v^i}r)\right|
\leq\frac{R}{2}2\sqrt{v^0u^0}(v^0+u^0)\left(\frac{u^0}{v^0}\right)
\frac{2(v^0+u^0)}{R}\sqrt{\frac{u^0}{v^0}}
\leq C\left(v^0(u^0)^2+(u^0)^3+\frac{(u^0)^4}{v^0}\right).
\end{align*}
Since $v^0\geq 1$ and $u^0\geq 1$, we combine the above estimates to
obtain the desired result.
\end{proof}

\begin{lemma}\label{Lem control derivatives S}
Consider the representation for $v'$ in \eqref{v' RS}. We have the following estimate:
\[
|\partial_{v^i}{v'^k}|\leq
C\left(\frac{Rv^0}{|v-u|}+\frac{Rv^0}{|v+u|}+\frac{R^2(v^0)^2}{|v-u|^2}\right)(u^0)^3,
\]
where the constant $C$ does not depend on $R$.
\end{lemma}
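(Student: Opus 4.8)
The plan is to differentiate the representation \eqref{v' RS} term by term, exactly as in the proof of Lemma \ref{Lem control derivatives GS}, and estimate each resulting piece using the basic lemmas. Writing
\[
v'^k=\frac{v^k+u^k}{2}+\frac{Rg}{2}\left(\hat{\omega}^k-\frac{(n\cdot\hat{\omega})n^k}{|n|^2}+\frac{n^0}{\sqrt{s}}\frac{(n\cdot\hat{\omega})n^k}{|n|^2}\right),
\]
taking $\partial_{v^i}$ produces the trivial term $\delta^{ik}/2$ together with terms in which the derivative falls on $Rg$, on $\sqrt{s}$ through the factor $n^0/\sqrt{s}$, on $n^0$, and on the rational expression $(n\cdot\hat{\omega})n^k/|n|^2$. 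First I would isolate these groups and record which estimate controls each factor: \eqref{derivative 6} for $\partial_{v^i}g$, \eqref{derivative 3} for $\partial_{v^i}\sqrt{s}$, \eqref{derivative 1} for $\partial_{v^i}v^0$ (hence $\partial_{v^i}n^0$), and \eqref{derivative 7} for $\partial_{v^i}\bigl[(n\cdot\hat{\omega})n^k/|n|^2\bigr]$.

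The key structural point, and the source of the three distinct terms in the stated bound, is the behaviour of the prefactor $Rg$ and of $1/\sqrt{s}$. Unlike in Lemma \ref{Lem control derivatives GS}, where $r=\sqrt{(n^0)^2-R^{-2}(n\cdot\omega)^2}\geq\sqrt{s}\geq 1$ gave a clean lower bound, here the denominators that appear are $|n|^2=|v+u|^2$ and $s$, and after the derivative hits $(n\cdot\hat{\omega})n^k/|n|^2$ one gains a factor $1/|n|=1/|v+u|$ from \eqref{derivative 7}, which accounts for the $Rv^0/|v+u|$ term. The term $Rv^0/|v-u|$ arises from combining $Rg$ with $\partial_{v^i}g\sim u^0/(Rg)$ via \eqref{derivative 2} together with the lower bound $Rg\geq |v-u|/\sqrt{v^0u^0}$ from \eqref{elementary 3}; and the quadratic term $R^2(v^0)^2/|v-u|^2$ comes from the two places where a factor $Rg$ multiplies something already carrying a $1/(Rg)$, namely when the $\sqrt{s}$-derivative or a second $g$-factor enters, so that $(Rg)\cdot(Rg)^{-1}$ is replaced by $(Rg)\cdot(Rg)^{-2}\sim R/(Rg)\sim Rv^0u^0/|v-u|$ and one further $Rg$ survives.

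After bounding each piece I would collect them and absorb all powers of $u^0$ into a single factor $(u^0)^3$, using $v^0\geq 1$, $u^0\geq 1$ to simplify, and using \eqref{elementary 7} to convert ratios $\sqrt{u^0/v^0}$ and $\sqrt{v^0/u^0}$ into bounded quantities where needed. I would also use $g\le\sqrt{s}\le 2\sqrt{v^0u^0}$ from \eqref{elementary 2} to control the bare $Rg$ factors in terms of $R\sqrt{v^0u^0}$, and the lower bounds for $Rg$ and $|n|$ to produce the denominators $|v-u|$ and $|v+u|$. The main obstacle I anticipate is bookkeeping rather than a genuine difficulty: one must track carefully how many powers of $Rg$ remain undivided in each term, since an off-by-one in these powers is exactly what distinguishes the linear terms $Rv^0/|v-u|$, $Rv^0/|v+u|$ from the quadratic term $R^2(v^0)^2/|v-u|^2$, and one must verify that the relation \eqref{relation between omega 2}, $|n\cdot\hat{\omega}|\le|n\cdot\omega|$, does not introduce any worse dependence through $\hat{\omega}$. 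Once the power-counting is pinned down, the estimate follows by direct substitution of the basic lemmas.
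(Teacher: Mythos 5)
Your proposal is correct and follows essentially the same route as the paper's proof: the paper likewise differentiates \eqref{v' RS} term by term, obtains the $Rv^0/|v-u|$ pieces from \eqref{derivative 2} combined with the lower bound in \eqref{elementary 3}, the $Rv^0/|v+u|$ pieces from \eqref{derivative 7}, and the quadratic piece $R^2(v^0)^2/|v-u|^2$ precisely from the two terms carrying $(\partial_{v^i}g)\,n^0/\sqrt{s}$ and $(\partial_{v^i}\sqrt{s})\,n^0/s$, i.e.\ two inverse powers of $g$, exactly the power-counting you describe. Your residual worry about \eqref{relation between omega 2} is vacuous, since in the representation \eqref{v' RS} the vector $\hat{\omega}$ is the independent parameter and is held fixed under $\partial_{v^i}$, just as in the paper.
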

\begin{proof}
The proof of this lemma is also a direct calculation
as in Lemma \ref{Lem control derivatives GS}.
We take $v^i$-derivative on $v'^k$ in the representation of \eqref{v' RS} to get
\begin{align*}
\partial_{v^i}{v'^k}&=\frac{\delta^{ik}}{2}
+\partial_{v^i}\left[\frac{Rg}{2}
\left(\hat{\omega}^k-\frac{(n\cdot\hat{\omega})n^k}{|n|^2}
+\frac{n^0}{\sqrt{s}}\frac{(n\cdot\hat{\omega})n^k}{|n|^2}\right)\right]\\
&=\frac{\delta^{ik}}{2}
+\frac{R}{2}(\partial_{v^i}g)\hat{\omega}^k
-\frac{R}{2}(\partial_{v^i}g)\frac{(n\cdot\hat{\omega})n^k}{|n|^2}
-\frac{Rg}{2}\partial_{v^i}\left[\frac{(n\cdot\hat{\omega})n^k}{|n|^2}\right]
+\frac{R}{2}(\partial_{v^i}g)\frac{n^0}{\sqrt{s}}\frac{(n\cdot\hat{\omega})n^k}{|n|^2}\\
&\quad+\frac{Rg}{2}\frac{(\partial_{v^i}v^0)}{\sqrt{s}}\frac{(n\cdot\hat{\omega})n^k}{|n|^2}
-\frac{Rg}{2}\frac{n^0}{s}(\partial_{v^i}\sqrt{s})\frac{(n\cdot\hat{\omega})n^k}{|n|^2}
+\frac{Rg}{2}\frac{n^0}{\sqrt{s}}\partial_{v^i}\left[\frac{(n\cdot\hat{\omega})n^k}{|n|^2}\right].
\end{align*}
As in the proof of Lemma \ref{Lem control derivatives GS}, we separately estimate
the above quantities. We first apply \eqref{derivative 2} and \eqref{elementary 3}
to the second quantity above to get
\[
\left|\frac{R}{2}(\partial_{v^i}g)\hat{\omega}^k\right|
\leq\frac{R}{2}\frac{2u^0}{Rg}\leq\frac{u^0}{g}\leq\frac{Ru^0\sqrt{v^0u^0}}{|v-u|},
\]
and similarly
\[
\left|\frac{R}{2}(\partial_{v^i}g)\frac{(n\cdot\hat{\omega})n^k}{|n|^2}\right|
\leq\frac{Ru^0\sqrt{v^0u^0}}{|v-u|}.
\]
The inequalities in \eqref{derivative 7} and \eqref{elementary 2} give
\[
\left|\frac{Rg}{2}\partial_{v^i}\left[\frac{(n\cdot\hat{\omega})n^k}{|n|^2}\right]\right|
\leq C\frac{Rg}{|n|}\leq C\frac{R\sqrt{v^0u^0}}{|v+u|}.
\]
For the fifth quantity, we apply \eqref{derivative 2}, \eqref{elementary 2},
and \eqref{elementary 3} to get
\[
\left|\frac{R}{2}(\partial_{v^i}g)\frac{n^0}{\sqrt{s}}
\frac{(n\cdot\hat{\omega})n^k}{|n|^2}\right|
\leq\frac{R}{2}\frac{2u^0}{Rg}\frac{(v^0+u^0)}{\sqrt{s}}
\leq C\frac{R^2v^0(u^0)^2(v^0+u^0)}{|v-u|^2}.
\]
Applying \eqref{elementary 2}, \eqref{elementary 3}, and \eqref{derivative 1}, we have
\[
\left|\frac{Rg}{2}\frac{(\partial_{v^i}v^0)}{\sqrt{s}}
\frac{(n\cdot\hat{\omega})n^k}{|n|^2}\right|
\leq\frac{R}{2}2\sqrt{v^0u^0}\frac{1}{R}\frac{R\sqrt{v^0u^0}}{|v-u|}
\leq \frac{Rv^0u^0}{|v-u|},
\]
and by \eqref{elementary 1}, \eqref{derivative 3}, \eqref{elementary 2},
and \eqref{elementary 3}, we obtain
\[
\left|\frac{Rg}{2}\frac{n^0}{s}(\partial_{v^i}\sqrt{s})
\frac{(n\cdot\hat{\omega})n^k}{|n|^2}\right|
\leq C\frac{u^0(v^0+u^0)}{s}
\leq C\frac{R^2v^0(u^0)^2(v^0+u^0)}{|v-u|^2}.
\]
The last quantity is estimated by \eqref{elementary 1} and \eqref{derivative 7} as
\[
\left|\frac{Rg}{2}\frac{n^0}{\sqrt{s}}\partial_{v^i}
\left[\frac{(n\cdot\hat{\omega})n^k}{|n|^2}\right]\right|
\leq C\frac{R(v^0+u^0)}{|v+u|}.
\]
Since $v^0\geq 1$ and $u^0\geq 1$, we combine the above estimates to obtain the desired
result.
\end{proof}
The above estimates in Lemma \ref{Lem control derivatives GS} and \ref{Lem control derivatives S}
will be crucially used in the proof of existence of classical solutions.
To estimate the collision operator, we will decompose the integration domain into three different
cases. We fix a finite time $t$ and separate the cases as follows:
\[
\begin{aligned}
&\text{(Case 1)}\qquad|v|\leq R,\\
&\text{(Case 2)}\qquad|v|\geq R\quad\text{and}\quad |v|\leq 2|u|,\\
&\text{(Case 3)}\qquad|v|\geq R\quad\text{and}\quad |v|\geq 2|u|.
\end{aligned}
\]
In the first and second cases, the estimate of Lemma \ref{Lem control derivatives GS}
is further estimated as follows:
\begin{align}\label{case 1}
&\text{(Case 1)}\quad\Longrightarrow\quad
|\partial_{v^i}v'^k|\leq C\sqrt{1+R^{-2}|v|^2}(u^0)^4
\leq C(u^0)^4,\\
&\text{(Case 2)}\quad\Longrightarrow\quad
|\partial_{v^i}v'^k|\leq C\sqrt{1+R^{-2}|v|^2}(u^0)^4
\leq C(u^0)^5.\label{case 2}\\
\intertext{In the third case, we note that $|v\pm u|\geq\frac{1}{2}|v|$ and
use Lemma \ref{Lem control derivatives S} to estimate}
&\text{(Case 3)}\quad\Longrightarrow\quad|\partial_{v^i}v'^k|\leq C\left(\frac{Rv^0}{|v|}
+\frac{(Rv^0)^2}{|v|^2}\right)(u^0)^3\leq C(u^0)^3.\label{case 3}
\end{align}
We observe that the estimates \eqref{case 1}--\eqref{case 3}
do not produce a growth in $v$. The main idea is to use the representation \eqref{v' RS}
together with \eqref{v' R} and decompose the integration domain into the three cases as above.
This argument was originally suggested by Guo and Strain
in \cite{GS12}, and the existence of classical solutions
of the relativistic Vlasov-Maxwell-Boltzmann system was proved.
In the following section, we will use the estimates \eqref{case 1}--\eqref{case 3}
to prove the existence of classical solutions to the Boltzmann equation
in the RW spacetime.

\subsection{Global existence}\label{Sec global existence proof}
We now prove the global existence theorem.

\begin{lemma}\label{Lem estimate 123}
We have the following estimate for $f$:
\[
|e^{|v|^2}f(t,v)|\leq \|f_0\|+C\|f(t)\|^2\int_0^tR^{-3}(s)+R^{b-4}(s)\,ds,
\]
where $C$ is a positive constant depending on $b$.
\end{lemma}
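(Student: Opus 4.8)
The plan is to pass to the mild (time-integrated) form of the Boltzmann equation \eqref{boltzmann}, multiply by the weight $e^{|v|^2}$, and estimate the gain and loss contributions separately against the norm \eqref{norm}. Integrating \eqref{boltzmann} in time gives
\[
e^{|v|^2}f(t,v)=e^{|v|^2}f_0(v)+\int_0^tR^{-3}(s)\iint v_\phi\sigma(g,\omega)\,e^{|v|^2}\big(f(v')f(u')-f(v)f(u)\big)\,d\omega\,du\,ds,
\]
where the initial term is bounded by $\|f_0\|$ directly from \eqref{norm} with $j=0$. For the loss term I would factor the weight as $e^{|v|^2}f(v)f(u)=\big(e^{|v|^2}f(v)\big)\big(e^{|u|^2}f(u)\big)e^{-|u|^2}$, so that by \eqref{norm} it is controlled by $\|f(t)\|^2e^{-|u|^2}$; the point is that the decaying factor $e^{-|u|^2}$ is produced, which guarantees integrability in $u$.

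The gain term is the main obstacle. The relativistic energy conservation $v'^0+u'^0=v^0+u^0$ does \emph{not} give $|v'|^2+|u'|^2=|v|^2+|u|^2$ at finite time, so the naive factorization fails and no decaying weight in $u$ is available a priori. Here I invoke Lemma \ref{Lem energy conservation bound}, which on the cutoff set $\bbs^2_R$ yields $|v|^2-|v'|^2-|u'|^2\leq B-|u|^2$. Writing $e^{|v|^2}f(v')f(u')=e^{|v|^2-|v'|^2-|u'|^2}\big(e^{|v'|^2}f(v')\big)\big(e^{|u'|^2}f(u')\big)$ and applying this bound, the gain term is also dominated by $e^{B}\|f(t)\|^2e^{-|u|^2}$. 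Since the post-collisional momenta at this zeroth-order level may be taken from the single representation \eqref{v' R}, no splitting between \eqref{v' R} and \eqref{v' RS} is needed; it is precisely the cutoff $\bbs^2_R$ built into \eqref{scattering 2} that makes the gain term tractable.

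Combining the two contributions, the double integral is dominated by $C\|f(t)\|^2\iint v_\phi\sigma(g,\omega)e^{-|u|^2}\,d\omega\,du$. Using the scattering assumptions \eqref{scattering 1}--\eqref{scattering 2}, I would bound $\sigma(g,\omega)\leq A(1+g^{-b})\sigma_1\mathbf{1}_{\bbs^2_R}(\omega)$, integrate out $\omega$ over $\bbs^2$ (which contributes a constant, as in Lemma \ref{Lem integral trivial}), and reduce to $\int_{\bbr^3}v_\phi(1+g^{-b})e^{-|u|^2}\,du$. Splitting this into the $\beta=0$ and $\beta=b$ instances of Lemma \ref{Lem integral 2}, with $0\le b<3$, gives the bound $C(1+R^{b-1})$.

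Finally, multiplying by the prefactor $R^{-3}(s)$ inside the time integral turns $C\big(1+R^{b-1}(s)\big)$ into $C\big(R^{-3}(s)+R^{b-4}(s)\big)$, and collecting everything yields
\[
|e^{|v|^2}f(t,v)|\leq\|f_0\|+C\|f(t)\|^2\int_0^tR^{-3}(s)+R^{b-4}(s)\,ds,
\]
as claimed. The only delicate step is the gain-term estimate, where the cutoff set $\bbs^2_R$ together with Lemma \ref{Lem energy conservation bound} replaces the Newtonian-type invariance of $|v|^2+|u|^2$ that is only recovered in the limit $t\to\infty$.
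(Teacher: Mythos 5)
Your proposal is correct and follows essentially the same route as the paper: the loss term is factored with $e^{-|u|^2}$ against the norm \eqref{norm}, the gain term is controlled via Lemma \ref{Lem energy conservation bound} under the cutoff \eqref{scattering 2}, and the $u$-integral is handled by the $\beta=0$ and $\beta=b$ cases of Lemma \ref{Lem integral 2}, yielding the factor $R^{-3}+R^{b-4}$. The only cosmetic difference is that you work with the time-integrated (mild) form, whereas the paper writes a differential inequality for $e^{|v|^2}f$ and integrates it from $0$ to $t$; the two are equivalent here, and you correctly observe, as the paper does implicitly, that the representation splitting \eqref{v' R}/\eqref{v' RS} is not needed at this zeroth-order level.
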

\begin{proof}
Multiplying the weight function to the Boltzmann equation \eqref{boltzmann} as in
Section \ref{Sec weight function}, we obtain
\begin{align}
&\partial_t\Big[e^{|v|^2}f(t,v)\Big]\label{estimate 1}\\
&=R^{-3}\iint v_\phi\sigma(g,\omega)
e^{|v|^2+|u|^2-|v'|^2-|u'|^2}\Big(e^{|v'|^2}f(v')e^{|u'|^2}f(u')\Big)e^{-|u|^2}d\omega\,du
\nonumber\\
&\quad-R^{-3}\iint v_\phi\sigma(g,\omega)
\Big(e^{|v|^2}f(v)e^{|u|^2}f(u)\Big)e^{-|u|^2}d\omega\,du\nonumber\\
&=:I_1+I_2,\nonumber
\end{align}
and $I_2$ is easily estimated as
\begin{align}
|I_2|&\leq R^{-3}\|f(t)\|^2
\iint v_\phi\sigma(g,\omega)e^{-|u|^2}d\omega\,du\label{estimate 2}\\
&\leq CR^{-3}\|f(t)\|^2
\left(\int v_\phi e^{-|u|^2}du
+\int v_\phi g^{-b} e^{-|u|^2}du\right)\nonumber\\
&\leq C\Big(R^{-3}(t)+R^{b-4}(t)\Big)\|f(t)\|^2,\nonumber
\end{align}
where we used Lemma \ref{Lem integral 2}.
For $I_1$, we use Lemma \ref{Lem energy conservation bound} to estimate
\begin{align}
|I_1|&\leq R^{-3}\|f(t)\|^2
\iint v_\phi\sigma(g,\omega)
e^{|v|^2+|u|^2-|v'|^2-|u'|^2}e^{-|u|^2}d\omega\,du\label{estimate 3}\\
&\leq CR^{-3}\|f(t)\|^2
\iint v_\phi\sigma(g,\omega)
e^{-|u|^2}d\omega\,du\nonumber\\
&\leq C\Big(R^{-3}(t)+R^{b-4}(t)\Big)\|f(t)\|^2.\nonumber
\end{align}
Then, we obtain a differential inequality from \eqref{estimate 1}--\eqref{estimate 3}.
Integrating it from $0$ to $t$, we have
\[
e^{|v|^2}f(t,v)\leq \|f_0\|+C\|f(t)\|^2\int_0^tR^{-3}(s)+R^{b-4}(s)\,ds,
\]
and this completes the proof.
\end{proof}
\begin{lemma}\label{Lem estimate 45678}
We have the following estimate for $\partial_vf$:
\[
|e^{|v|^2}\partial_{v^i}f(t,v)|\leq
\|f_0\|+C\|f(t)\|^2\int_0^tR^{-3}(s)+R^{b-4}(s)\,ds,\quad i=1,2,3,
\]
where $C$ is a positive constant depending on $b$.
\end{lemma}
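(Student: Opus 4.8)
The plan is to mirror the proof of Lemma \ref{Lem estimate 123}, now carrying one extra spatial derivative. Since the weight $e^{|v|^2}$ is independent of $t$, I would first commute $\partial_t$ and $\partial_{v^i}$ and differentiate the Boltzmann equation \eqref{boltzmann} under the integral sign (legitimate at the level of the smooth iterates used in the existence scheme), obtaining
\[
\partial_t\big[e^{|v|^2}\partial_{v^i}f\big]=e^{|v|^2}\partial_{v^i}\Big[R^{-3}\iint v_\phi\sigma(g,\omega)\big(f(v')f(u')-f(v)f(u)\big)\,d\omega\,du\Big].
\]
Moving $\partial_{v^i}$ inside, the derivative falls on three kinds of factors: the kinetic weight $v_\phi\sigma(g,\omega)$, the gain product $f(v')f(u')$, and the loss product $f(v)f(u)$. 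I would group the resulting terms accordingly and bound them one at a time, in direct analogy with the treatment of $I_1$ and $I_2$ in Lemma \ref{Lem estimate 123}.

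For the terms in which $\partial_{v^i}$ hits $v_\phi\sigma$, I would use the product rule together with $\partial_{v^i}\sigma=\partial_g\sigma\cdot\partial_{v^i}g$ and the derivative identities of Lemma \ref{Lem derivatives}, in particular \eqref{derivative 6} for $\partial_{v^i}g$ and $\partial_{v^i}\sqrt{s}$ (which avoids the $g^{-1}$ singularity of \eqref{derivative 2}), combined with the scattering bounds \eqref{scattering 1}. The worst singularity produced is of order $g^{-b}$, so with $0\leq b<3$ the relevant exponent stays strictly below $4$ and the $u$-integral is controlled by Lemma \ref{Lem integral 2}, again producing a factor $R^{-3}+R^{b-4}$. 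The accompanying factors $e^{|v|^2}f(v)f(u)$ (loss) and, after inserting the energy-conservation bound \eqref{energy conservation bound} exactly as in \eqref{estimate 3}, $e^{|v|^2+|u|^2-|v'|^2-|u'|^2}f(v')f(u')$ (gain) are bounded by $\|f(t)\|^2e^{-|u|^2}$ up to the constant $e^{B}$.

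The genuinely new piece is the term where $\partial_{v^i}$ hits the post-collisional arguments. By the chain rule $\partial_{v^i}f(v')=(\partial_{v^k}f)(v')\,\partial_{v^i}v'^k$ and $\partial_{v^i}f(u')=(\partial_{u^k}f)(u')\,\partial_{v^i}u'^k$, and since momentum conservation gives $\partial_{v^i}u'^k=\delta^{ik}-\partial_{v^i}v'^k$, everything reduces to controlling $\partial_{v^i}v'^k$. Here I would decompose the $v$-region into the three cases introduced before the lemma and exploit the interplay of the two representations: in Cases 1 and 2 use \eqref{v' R} via Lemma \ref{Lem control derivatives GS} to obtain \eqref{case 1}--\eqref{case 2}, and in Case 3 use \eqref{v' RS} via Lemma \ref{Lem control derivatives S} to obtain \eqref{case 3}. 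The crucial feature of these bounds is that they carry no growth in $v$ and only polynomial growth $(u^0)^5$ in $u$, which is absorbed by $e^{-|u|^2}$; together with $e^{|v'|^2}|(\partial_{v^k}f)(v')|\leq\|f(t)\|$ and $e^{|v|^2-|v'|^2-|u'|^2}\leq e^{B}e^{-|u|^2}$, the $u$-integral is once more estimated by Lemmas \ref{Lem integral} and \ref{Lem integral 2}.

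Collecting all contributions yields the differential inequality $\big|\partial_t[e^{|v|^2}\partial_{v^i}f]\big|\leq C\big(R^{-3}(t)+R^{b-4}(t)\big)\|f(t)\|^2$, and integrating from $0$ to $t$ gives the stated bound. I expect the main obstacle to be precisely the chain-rule terms: a naive estimate of $\partial_{v^i}v'^k$ grows with $v$ and would destroy the $v$-uniformity encoded in the weight $e^{|v|^2}$. The resolution, and the heart of the argument, is the Guo--Strain splitting, namely switching between \eqref{v' R} and \eqref{v' RS} according to the three cases so that the derivative factors remain bounded in $v$ while growing only in $u^0$.
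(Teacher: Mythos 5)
Your proposal is correct and follows essentially the same route as the paper's proof: the same three-way grouping of terms (derivative on $v_\phi\sigma$, on the gain product, and on the loss $f$), the same bound on $\partial_{v^i}\big[v_\phi\sigma(g,\omega)\big]$ via \eqref{derivative 6}, \eqref{derivative 1} and \eqref{scattering 1} followed by Lemma \ref{Lem integral 2}, and the same Guo--Strain splitting between the representations \eqref{v' R} and \eqref{v' RS} according to the three cases \eqref{case 1}--\eqref{case 3} to keep $\partial_{v^i}v'^k$ free of growth in $v$. The only cosmetic deviations are your use of momentum conservation to reduce $\partial_{v^i}u'^k=\delta^{ik}-\partial_{v^i}v'^k$ (the paper simply bounds $\partial_{v^i}u'$ by the same case analysis) and your coarser factor $R^{-3}+R^{b-4}$ for the $J_1$-type term, where the paper's extra $R^{-1}$ from \eqref{derivative 6} gives $R^{-4}+R^{b-5}$; since $R\geq 1$ both suffice for the stated inequality.
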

\begin{proof}
In this lemma, we make use of \eqref{case 1}--\eqref{case 3}.
We take $v^i$-derivative on the Boltzmann equation \eqref{boltzmann} and multiply
the weight function to have
the following equation for $\partial_{v^i}f$:
\begin{align}\label{estimate 4}
\partial_t\Big[e^{|v|^2}\partial_{v^i}f\Big]
&=R^{-3}\iint\partial_{v^i}\Big[v_\phi\sigma(g,\omega)\Big]e^{|v|^2}
\Big(f(v')f(u')-f(v)f(u)\Big)\,d\omega\,du\\
&\quad+R^{-3}\iint v_\phi\sigma(g,\omega)e^{|v|^2}
\partial_{v^i}\Big[f(v')f(u')\Big]\,d\omega\,du\nonumber\\
&\quad-R^{-3}\iint v_\phi\sigma(g,\omega)e^{|v|^2}
(\partial_{v^i}f)(v)f(u)\,d\omega\,du\nonumber\\
&=:J_1+J_2+J_3.\nonumber
\end{align}
For $J_1$, we note that
\begin{align*}
\partial_{v^i}\Big[v_\phi\sigma(g,\omega)\Big]
&=(\partial_{v^i}g)\frac{\sqrt{s}}{v^0u^0}\sigma(g,\omega)
+(\partial_{v^i}\sqrt{s})\frac{g}{v^0u^0}\sigma(g,\omega)\\
&\quad-(\partial_{v^i}v^0)\frac{g\sqrt{s}}{(v^0)^2u^0}\sigma(g,\omega)
+\frac{g\sqrt{s}}{v^0u^0}(\partial_{v^i}g)\partial_g\sigma(g,\omega).
\end{align*}
Applying \eqref{derivative 6}, \eqref{derivative 1}, and \eqref{elementary 2} to the above,
we estimate
\begin{align*}
\Big|\partial_{v^i}\Big[v_\phi\sigma(g,\omega)\Big]\Big|
&\leq \frac{u^0\sqrt{v^0u^0}}{R}\frac{\sqrt{s}}{v^0u^0}\sigma(g,\omega)
+\frac{u^0\sqrt{v^0u^0}}{R}\frac{g}{v^0u^0}\sigma(g,\omega)\\
&\quad+\frac{1}{R}\frac{g\sqrt{s}}{(v^0)^2u^0}\sigma(g,\omega)
+\frac{u^0\sqrt{v^0u^0}}{R}\frac{g\sqrt{s}}{v^0u^0}
|\partial_g\sigma(g,\omega)|\\
&\leq \frac{Cu^0}{R}\Big(\sigma(g,\omega)
+g|\partial_g\sigma(g,\omega)|\Big).
\end{align*}
By the assumption on the scattering kernel \eqref{scattering 1}, we get
\begin{align*}
\Big|\partial_{v^i}\Big[v_\phi\sigma(g,\omega)\Big]\Big|
\leq CR^{-1}u^0(1+g^{-b})\sigma_0(\omega).
\end{align*}
Since $u^0\leq\sqrt{1+|u|^2}$, we estimate $J_1$ by the same argument as in Lemma
\ref{Lem estimate 123}:
\begin{align}\label{estimate 5}
|J_1|\leq C\Big(R^{-4}(t)+R^{b-5}(t)\Big)\|f(t)\|^2.
\end{align}
The estimate of $J_3$ is exactly the same with Lemma \ref{Lem estimate 123}.
We obtain
\begin{align}\label{estimate 6}
|J_3|\leq C\Big(R^{-3}(t)+R^{b-4}(t)\Big)\|f(t)\|^2.
\end{align}
For $J_2$, we write
\begin{align*}
J_2=R^{-3}\iint v_{\phi}\sigma(g,\omega)e^{|v|^2}
\Big((\partial_{v^i}v')(\partial_vf)(v')f(u')
+f(v')(\partial_{v^i}u')(\partial_vf)(u')\Big)\,d\omega\,du,
\end{align*}
and separate the cases as in \eqref{case 1}--\eqref{case 3}.
We fix a momentum $v$ and note that
$R(t)$ is an increasing function with $R(0)=1$.
Then, we can find a finite time $t_0$ such that
\[
t\geq t_0\quad\Longleftrightarrow\quad |v|\leq R(t).
\]
Note that $t_0$ can be zero for small $v$.
We first consider the case of $t\geq t_0$. In this case, we apply \eqref{case 1}, and
the quantities $\partial_{v^i}v'$ and $\partial_{v^i}u'$ are bounded by $(u^0)^4$.
This quantity can be controlled by the weight function as in \eqref{estimate 3},
we obtain the following estimate:
\begin{align}\label{estimate 7}
t\geq t_0\quad\Longrightarrow\quad |J_2|\leq
C\Big(R^{-3}(t)+R^{b-4}(t)\Big)\|f(t)\|^2.
\end{align}
In the case of $t\leq t_0$, we decompose the integration domain as in \eqref{case 2}
and \eqref{case 3}. We write $J_2$ as
\[
J_2=R^{-3}\iint_{|v|\leq 2|u|}\cdots\, d\omega\,du
+R^{-3}\iint_{|v|\geq 2|u|}\cdots\, d\omega\,du=:J_{21}+J_{22},
\]
and apply \eqref{case 2} to $J_{21}$. Then, the quantities $\partial_{v^i}v'$
and $\partial_{v^i}u'$ are bounded by $(u^0)^5$. In the case of $J_{22}$,
we consider the post-collisional momenta $v'$ and $u'$ in the representation of \eqref{v' RS}.
Applying \eqref{case 3} to $J_{22}$, we control $\partial_{v^i}v'$
and $\partial_{v^i}u'$ to be bounded by $(u^0)^3$. By the same argument as in the
proof of Lemma \ref{Lem estimate 123}, we now obtain the following estimate:
\begin{align}\label{estimate 8}
t\leq t_0\quad\Longrightarrow\quad |J_2|\leq
C\Big(R^{-3}(t)+R^{b-4}(t)\Big)\|f(t)\|^2.
\end{align}
We combine the estimates \eqref{estimate 4}--\eqref{estimate 8}
and apply them to \eqref{estimate 4} to obtain
\[
|e^{|v|^2}\partial_{v^i}f(t,v)|\leq
\|f_0\|+C\|f(t)\|^2\int_0^tR^{-3}(s)+R^{b-4}(s)\,ds,
\]
and this completes the proof.
\end{proof}

We are now ready to prove global existence of solutions to the Boltzmann equation
in the RW spacetime. With Lemma \ref{Lem estimate 123} and \ref{Lem estimate 45678}
it is easy to apply the method of \cite{IS84}. The estimates of those lemmas give
the following estimate for $f$:
\[
\|f(t)\|\leq \|f_0\|+C\|f(t)\|^2\int_0^tR^{-3}(s)+R^{b-4}(s)\,ds
\]
for some positive constant $C$.
In the case that the scale factor $R$ grows fast enough such that the integral above
converges, we obtain the following inequality:
\[
\|f(t)\|\leq \|f_0\|+C\|f(t)\|^2.
\]
The above inequality shows that if initial data is sufficiently small, then global existence
of solutions is guaranteed. For detailed arguments of the proof of this framework, we refer to
\cite{G}. By following the proofs of \cite{G06,G01,IS84,S101}, we finally obtain the following
theorem.

\begin{theorem}\label{Thm}
Consider the relativistic Boltzmann equation in the spatially flat
Robertson-Walker spacetime in the form of \eqref{boltzmann}.
Suppose that the scattering kernel satisfies \eqref{scattering 1} and \eqref{scattering 2},
and let the scale factor $R$
be given and satisfy \eqref{scale factor} together with the following condition:
\begin{equation}\label{integrability condition}
\int_0^\infty R^{-3}(t)+R^{b-4}(t)\,dt<\infty,
\end{equation}
where $b$ is the constant given in \eqref{scattering 1}.
Let $f_0$ be an initial data such that it is differentiable and satisfies
$\|f_0\|<\varepsilon$ for some positive constant $\varepsilon$.
If the constant $\varepsilon$ is sufficiently small, then there exists
a unique nonnegative classical solution
of the Boltzmann equation \eqref{boltzmann} such that
\begin{equation}\label{property of solution}
\sup_{0\leq t<\infty}\|f(t)\|\leq C_\varepsilon,
\end{equation}
where $C_\varepsilon$ is some positive constant depending on $\varepsilon$.
\end{theorem}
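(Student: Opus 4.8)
The plan is to realise the solution as the fixed point of the mild (Duhamel) formulation of \eqref{boltzmann} in the Banach space $X:=\{f:\|f\|<\infty\}$ equipped with the weighted $C^1$ norm \eqref{norm}, following the scheme of Illner and Shinbrot. The crucial structural simplification is that, in the covariant variable $v$, the transport part of \eqref{boltzmann in RW} has been absorbed into the change of variables, so \eqref{boltzmann} is a pure time evolution $\partial_t f=R^{-3}Q(f,f)$ for each fixed $v$. Splitting $Q=Q^+-fL$ into the gain term $Q^+$ and the loss frequency $L[f](t,v):=R^{-3}\iint v_\phi\sigma\,f(u)\,d\omega\,du$, I would define the iterates by the positivity-preserving scheme
\[
f^{(n+1)}(t,v)=f_0(v)\,e^{-\int_0^t L[f^{(n)}]\,ds}
+\int_0^t e^{-\int_\tau^t L[f^{(n)}]\,ds'}\,Q^+[f^{(n)},f^{(n)}](\tau,v)\,d\tau ,
\]
starting from $f^{(0)}\equiv f_0$. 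With $f_0\ge 0$ this keeps every iterate nonnegative, and the exponential loss factor is bounded by $1$, so the estimates already proved for the full collision operator apply verbatim to each term.

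Next I would establish the self-mapping property on a small ball. Writing $K:=\int_0^\infty R^{-3}(t)+R^{b-4}(t)\,dt$, which is finite by the integrability hypothesis \eqref{integrability condition}, Lemmas \ref{Lem estimate 123} and \ref{Lem estimate 45678} yield the uniform-in-$t$ bound $\|f^{(n+1)}(t)\|\le \|f_0\|+CK\,\sup_{0\le s\le t}\|f^{(n)}(s)\|^2$. Hence if $\|f_0\|<\varepsilon$ and one works in the ball $\{\,\sup_t\|f\|\le M\,\}$ with $M:=2\varepsilon$, the self-mapping condition reduces to $\varepsilon+CK(2\varepsilon)^2\le 2\varepsilon$, i.e.\ $4CK\varepsilon\le 1$, which holds once $\varepsilon$ is chosen small enough. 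This is exactly the point where the growth of $R$ enters: \eqref{integrability condition} is what makes $R^{-3}$ and $R^{b-4}$ time-integrable and thus converts the pointwise bounds of Lemmas \ref{Lem estimate 123}--\ref{Lem estimate 45678} into a closed quadratic inequality.

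For convergence I would exploit the bilinearity of $Q$, writing $Q(f,f)-Q(h,h)=Q(f-h,f)+Q(h,f-h)$, and run the \emph{same} integral estimates on the difference $f^{(n+1)}-f^{(n)}$, with one factor replaced by $f^{(n)}-f^{(n-1)}$ and the other kept in the ball of radius $M$; differences of the two loss exponentials are handled by $|e^{-a}-e^{-b}|\le|a-b|$ together with the Lipschitz bound on $L$. The appeal to Lemma \ref{Lem energy conservation bound} — and therefore to the cutoff $\bbs^2_R$ in \eqref{scattering 2}, which is precisely what bounds the weight factor $e^{|v|^2+|u|^2-|v'|^2-|u'|^2}$ on the gain term — goes through unchanged, yielding $\sup_t\|f^{(n+1)}-f^{(n)}\|\le 2CKM\,\sup_t\|f^{(n)}-f^{(n-1)}\|$. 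Since $2CKM=4CK\varepsilon<1$ for small $\varepsilon$, the sequence is Cauchy in $X$, its limit $f$ solves the mild equation, is nonnegative as a limit of nonnegative iterates, and satisfies \eqref{property of solution} with $C_\varepsilon=2\varepsilon$. Membership in $X$ controls $f$ and $\partial_{v}f$ pointwise, so $Q(f,f)$ and hence $\partial_t f$ are continuous; this upgrades the mild solution to a classical one, and the contraction estimate simultaneously delivers uniqueness.

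The genuinely hard analytic work has already been discharged in the preceding lemmas, so within the theorem itself the main obstacle is organisational: arranging a single iteration that simultaneously (i) preserves nonnegativity, (ii) closes in the $C^1$ norm, and (iii) produces a contraction. The most delicate of these is (ii) for the derivative of the gain term, where after differentiating $f(v')f(u')$ one must control $\partial_{v^i}v'$ and $\partial_{v^i}u'$ for the difference iterates using the bounds \eqref{case 1}--\eqref{case 3} uniformly across the region decomposition $|v|\le R$, $|v|\ge R$; here the interplay between the two representations \eqref{v' R} and \eqref{v' RS} — switching to \eqref{v' RS} exactly on $\{|v|\ge 2|u|\}$ so that no growth in $v$ is produced — must be carried over to the bilinear difference, which is the one place where I would verify the constants directly rather than quote the earlier lemmas.
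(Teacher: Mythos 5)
Your proposal is correct and follows essentially the same route as the paper: the paper derives the closed quadratic inequality $\|f(t)\|\leq\|f_0\|+C\|f(t)\|^2\int_0^t R^{-3}(s)+R^{b-4}(s)\,ds$ from Lemmas \ref{Lem estimate 123} and \ref{Lem estimate 45678}, invokes the integrability condition \eqref{integrability condition} to make the time integral uniformly bounded, and then cites \cite{IS84,G06,G01,S101} for the standard small-data fixed-point details that you have written out explicitly (the positivity-preserving exponential iteration, the self-mapping ball, and the contraction via bilinearity of $Q$). Your elaboration is faithful to that framework, so no substantive difference or gap needs to be flagged.
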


\subsection{Discussions}\label{Sec discussions}
In Theorem \ref{Thm}, we have obtained global existence of classical solutions to the Boltzmann
equation \eqref{boltzmann}. Note that the equation \eqref{boltzmann} is written in the transformed
variable $v$, hence we write it back in the original variable $p$,
and then \eqref{property of solution} can be written as
\begin{equation}\label{pointwise estimate}
f(t,v)\leq C_\varepsilon e^{-|v|^2}\quad\text{or}\quad
f(t,p)\leq C_\varepsilon e^{-R^4(t)|p|^2}.
\end{equation}
We may compare this result with the Vlasov case. The Vlasov equation in the RW spacetime
is obtained by simply ignoring the right hand side of the Boltzmann equation \eqref{boltzmann},
i.e. $\partial_tf=0$ in the transformed variable, and we get $f(t,v)=f_0(v)$.
It is usual in the Vlasov case to assume that initial data has a compact support such that
$f_0(v)=0$ for $|v|\geq C$, hence we obtain
\[
f(t,v)=0\quad\text{for}\quad |v|\geq C,\quad\text{or}\quad
f(t,p)=0\quad\text{for}\quad |p|\geq CR^{-2}(t).
\]
This kind of estimates for the Vlasov equation has already been obtained
in more general cases.
In \cite{L04}, the author studied the Einstein-Vlasov system with
a positive cosmological constant in the spacetimes
of all Bianchi types except IX.
She showed that the distribution function $f$ satisfies the above
estimates with $R(t)\approx e^{c t}$ (see Theorem 3.8 of \cite{L04}).
A similar result has been obtained in \cite{N10}, where the Einstein-Vlasov system was studied
with zero cosmological constant.
We also remark that the integrability condition \eqref{integrability condition}
does not seem to be a strong restriction.
The condition \eqref{integrability condition} is indeed satisfied by several special solutions
of the Einstein equations, for instance the Einstein-de Sitter model $R(t)=t^{2/3}$
and the de Sitter spacetime $R(t)=e^{Ht}$
(see \cite{R} for more details). We expect that the condition \eqref{integrability condition}
holds in more general spacetimes.

To summarize, we have studied the relativistic Boltzmann equation in a given RW spacetime.
By applying the argument of Guo and Strain \cite{GS12}, we could prove the global existence of
classical solutions together with the asymptotic behaviour \eqref{pointwise estimate}.
In this paper, we assumed that the scale factor $R$ is given, but the result of this paper
can be easily applied to the Einstein-Boltzmann case.
The Boltzmann equation is coupled to the Einstein equations through the energy-momentum
tensor, and the energy-momentum tensor of the Boltzmann equation
has the same form with that of the Vlasov equation. Hence,
existence of solutions will be proved
in a way similar to the Einstein-Vlasov cases. We also hope that this work can be applied to
more general cases.

\subsection*{Acknowledgements}
This work was supported by a grant from the Kyung Hee University in 2013 (KHU-20130368).


\begin{thebibliography}{99}

\bibitem{A11} Andr{\'e}asson, H.:
The Einstein-Vlasov system/kinetic theory.
{\it Living Rev. Relativity} 14, (2011), 4.
URL: http://www.livingreviews.org/lrr-2011-4

\bibitem{B73} Bancel, D.:
Probl{\`e}me de Cauchy pour l'{\'e}quation de Boltzmann en relativit{\'e} g{\'e}n{\'e}rale.
{\it Ann. Inst. H. Poincar{\'e} Sect. A (N.S.)} 18 (1973), 263--284.

\bibitem{BCB73} Bancel, D., Choquet-Bruhat, Y.:
Existence, uniqueness, and local stability for the Einstein-Maxwell-Boltzmann system.
{\it Comm. Math. Phys.} 33 (1973), 83--96.

\bibitem{CIP} Cercignani, C., Illner, R., Pulvirenti, M.:
The mathematical theory of dilute gases.
Applied Mathematical Sciences, 106.
{\it Springer-Verlag, New York,} 1994.

\bibitem{dvv} de Groot, S. R., van Leeuwen, W. A., van Weert, C. G.:
Relativistic kinetic theory. Principles and applications.
{\it North-Holland Publishing Co., Amsterdam-New York,} 1980.

\bibitem{E} Ehlers, J.: General relativity and kinetic theory.
{\it General relativity and cosmology (Proc. Internat. School of Physics ``Enrico Fermi'', Italian Phys.
Soc., Varenna, 1969),} pp. 1--70. Academic Press, New York, 1971.

\bibitem{G} Glassey, R. T.:
The Cauchy problem in kinetic theory.
{\it Society for Industrial and Applied Mathematics (SIAM), Philadelphia, PA,} 1996.

\bibitem{G06} Glassey, R. T.:
Global solutions to the Cauchy problem for the relativistic Boltzmann equation
with near-vacuum data.
{\it Comm. Math. Phys.} 264 (2006), 705--724.


\bibitem{GS91} Glassey, R. T., Strauss, W. A.: On the derivatives of
the collision map of relativistic particles.
{\it Transport Theory Statist. Phys.} 20 (1991), no. 1, 55--68.

\bibitem{GS93} Glassey, R. T., Strauss, W. A.: Asymptotic stability
of the relativistic Maxwellian.
{\it Publ. Res. Inst. Math. Sci.} 29 (1993), no. 2, 301--347.

\bibitem{G01} Guo, Y.:
The Vlasov-Poisson-Boltzmann system near vacuum.
{\it Comm. Math. Phys.} 218 (2001), 293--313.


\bibitem{GS12} Guo, Y., Strain, R. M.:
Momentum regularity and stability of the relativistic Vlasov-Maxwell-Boltzmann system.
{\it Comm. Math. Phys.} 310 (2012), no. 3, 649--673.

\bibitem{IS84} Illner, R., Shinbrot, M.:
The Boltzmann equation: global existence for a rare gas in an infinite vacuum.
{\it Comm. Math. Phys.} 95 (1984), 217--226.

\bibitem{L04} Lee, H.:
Asymptotic behaviour of the Einstein-Vlasov system with a positive cosmological constant.
{\it Math. Proc. Camb. Phil. Soc.} 137 (2004), 495--509.

\bibitem{L13} Lee, H.:
Global solutions of the Vlasov-Poisson-Boltzmann system in a cosmological setting.
To appear in {\it J. Math. Phys.}

\bibitem{LR13} Lee, H., Rendall, A. D.: The Einstein-Boltzmann system and positivity.
{\it J. Hyperbolic Differ. Equ.} 10 (2013), no. 1, 77--104.

\bibitem{LR131} Lee, H., Rendall, A. D.:
The spatially homogeneous relativistic Boltzmann equation with a hard potential.
Preprint arXiv:1301.0106v1.

\bibitem{ND06} Noutchegueme, N., Dongo, D.:
Global existence of solutions for the Einstein-Boltzmann system in a Bianchi type I
spacetime for arbitrary large initial data.
{\it Classical Quantum Gravity} 23 (2006), no. 9, 2979--3003.

\bibitem{NDT05} Noutchegueme, N., Dongo, D., Takou, E.:
Global existence of solutions for the relativistic
Boltzmann equation with arbitrary large initial data on a Bianchi Type I space-time.
{\it Gen. Relativity Gravitation} 37 (2005), no. 12, 2047--2062.

\bibitem{NT05} Noutchegueme, N., Takou, E.:
Global existence of solutions for the Einstein-Boltzmann system with cosmological
constant in the Robertson-Walker space-time.
{\it Commun. Math. Sci.} 4 (2006), no. 2, 291--314.

\bibitem{N10} Nungesser, E.:
Isotropization of non-diagonal Bianchi I spacetimes with collisionless matter
at late times assuming small data.
{\it Class. Quantum Grav.} 27 (2010), 235025.

\bibitem{R} Rendall, A. D.:
Partial differential equations in General Relativity.
Oxford Graduate Texts in Mathematics, 16.
{\it Oxford University Press, Oxford,} 2008.

\bibitem{R94} Rendall, A. D.:
Cosmic censorship for some spatially homogeneous cosmological models.
{\it Ann. Phys. (N.Y.)} 233 (1994), 82--96.

\bibitem{R95} Rendall, A. D.:
Global properties of locally homogeneous cosmological models with matter.
{\it Math. Proc. Camb. Phil. Soc.} 118 (1995), 511--526.

\bibitem{R96} Rendall, A. D.:
The initial singularity in solutions of the Einstein-Vlasov system of Bianchi type I.
{\it J. Math. Phys.} 37 (1996), 438--451.

\bibitem{R05} Rendall, A. D.:
Theorems on existence and global dynamics for the Einstein equations.
{\it Living Rev. Relativity} 8, (2005), 6. URL:
http://www.livingreviews.org/lrr-2005-6

\bibitem{RT99} Rendall, A. D., Tod, K. P.:
Dynamics of spatially homogeneous solutions of the Einstein-Vlasov equations
which are locally rotationally symmetric.
{\it Class. Quantum Grav.} 16 (1999), 1705--1726.

\bibitem{RU00} Rendall, A. D., Uggla, C.:
Dynamics of spatially homogeneous locally rotationally symmetric solutions
of the Einstein-Vlasov equations.
{\it Class. Quantum Grav.} 17 (2000), 4697--4713.

\bibitem{St} Stewart, J. M.: Non-equilibrium relativistic kinetic theory.
Lecture Notes in Physics, 10. {\it Springer-Verlag,
Berlin-Heidelberg-New York,} 1971.


\bibitem{S101} Strain, R. M.:
Global Newtonian limit for the relativistic Boltzmann equation near vacuum.
{\it SIAM J. Math. Anal.} 42 (2010), no. 4, 1568--1601.

\bibitem{S102} Strain, R. M.:
Asymptotic stability of the relativistic Boltzmann equation for the soft potentials.
{\it Comm. Math. Phys.} 300 (2010), no. 2, 529--597.

\bibitem{S11} Strain, R. M.:
Coordinates in the relativistic Boltzmann theory.
{\it Kinet. Relat. Models} 4 (2011), no. 1, 345--359.

\bibitem{T09} Takou, E.:
Global properties of the solutions of the Einstein-Boltzmann system with
cosmological constant in the Robertson-Walker space-time.
{\it Commum. Math. Sci.} 7 (2009), 399--410.

\bibitem{V} Villani, C.: A review of mathematical topics in collisional kinetic theory.
{\it Handbook of mathematical fluid dynamics, Vol. I,}  71--305,
{\it North-Holland, Amsterdam,} 2002.
\end{thebibliography}
\end{document}